\newtheorem{theorem}{{Theorem}}
\newtheorem{lemma}{{Lemma}}
\newtheorem{remark}{{Remark}}
\newtheorem*{open problem}{{Open problem}}
\def\cdd{\mbox{\boldmath$\cdot$}~}
\def\@oddfoot{\hfill}
\def\setshumei#1#2#3{%
  \shumeicount=\count0
  \def\@oddhead{%
    \raise-5pt\hbox to0pt{\vrule width\hsize height 0pt depth 0.4pt\hss}\relax
    \ifnum \shumeicount=\count0
      \raise-7pt\hbox to0pt{\vrule width\hsize height 0pt depth 0.4pt\hss}\relax
      #1
    \else
      \ifodd\count0
        #2
      \else
        #3
       \fi
     \fi
  }%
}
\def\@oddfoot{\hfill}
\def\setshujiao{%
  \shujiaocount=\count0
  \def\@oddfoot{%
      \ifodd\count0
      \else
      \fi
  }%
}
\def\dse#1{\vskip 0.6cm\noindent
        {\large\bf #1}
        \vskip 0.4cm}
\def\dse#1{\vskip 0.6cm\noindent
        {\large\bf #1}
        \vskip 0.4cm}
\def\biaoti#1#2#3#4{{
  \vspace*{0.3cm}
  \begin{flushleft} \Large\bf #1\end{flushleft}
  \vspace*{-0.2cm}
      \begin{flushleft}
      \bf #2
      \end{flushleft}
      \footnotetext{\hspace{-6mm} #3\\ #4}}}
\def\dshm#1#2#3#4
\def\drd#1
\def\dab#1#2{\noindent {{\small\bf Abstract~~}}{{\small #1}}
            \vskip 0.1cm
             \noindent{{\small\bf Key words~~}}{{\small #2}}
                      }
\def\dse#1{\vskip 0.6cm\noindent
        {\large\bf #1}
        \vskip 0.4cm}
\def\Rmn#1{\expandafter\uppercase\expandafter{\romannumeral #1}}
\begin{document}

\biaoti{Binary duadic codes and their related codes with a square-root-like lower bound}
{Tingting Wu$^{1}$  \cdd Lanqiang Li$^{1}$  \cdd Xiuyu Zhang$^{2}$ \cdd Shixin Zhu$^{2}$ }
{*Corresponding author}
{Tingting Wu \\{ Email:
wutingting58@163.com\\
Lanqiang Li \\Email:lilanqiang716@126.com\\
Xiuyu Zhang \\ Email:xyzfuman@mail.hfut.edu.cn\\
Shixin Zhu \\ Email:zhushixinmath@hfut.edu.cn\\
\\{}}\\
{1. School of Information and Artificial Intelligence, Anhui Agricultural University, Hefei 230036, Anhui, China\\
2. School of Mathematics, Hefei University of Technology, Hefei 230009, Anhui, China 
}}

\dshm{}{T.T. Wu $\cdot$ L.Q. Li $\cdot$ X.Y. Zhang $\cdot$ S.X. Zhu }{T.T. Wu $\cdot$ L.Q. Li  $\cdot$ X.Y. Zhang $\cdot$ S.X. Zhu }

\dab{Binary cyclic codes have been a hot topic for many years, and significant progress has been made in the study of this types of codes. As is well known, it is hard to construct infinite families of binary cyclic codes $[n,\frac{n\pm 1}{2}]$ with good minimum distance. In this paper, by using the BCH bound on cyclic codes, one of the open problems proposed by Liu et al. about binary cyclic codes (Finite Field Appl 91:102270, 2023) is settled. Specially, we present several families of binary duadic codes with length $2^m-1$ and dimension $2^{m-1}$, and the minimum distances have a square-root-like lower bound. As a by-product, the parameters of their dual codes and extended codes are provided, where the latter are self-dual and doubly-even. }
{Cyclic code; Duadic code; BCH bound; Self-dual code}

\dse{1~~Introduction}

Let $p$ be a prime and $\mathbb{F}_{p^m}$ be a finite field with $p^m$ elements. A $p$-ary $[n, k, d]$ linear code $\mathcal{C}$ is a $k$-dimensional subspace of $\mathbb{F}_{p}^{n}$ with minimum distance $d$. Define the dual code $\mathcal{C}^{\perp}$ of $\mathcal{C}$ as 
$$
\mathcal{C}^{\perp}=\{\vec{u}\in \mathbb{F}_{p}^{n}| \vec{u}\cdot \vec{v}=0~\text{for any} ~\vec{v} \in \mathcal{C} \}.
$$
If $\mathcal{C} \bigcap \mathcal{C}^{\perp}=0$, then $C$ is called linear complementary dual(LCD); and if $\mathcal{C}=\mathcal{C}^{\perp}$, $\mathcal{C}$ is called self-dual. The linear code $\mathcal{C}$ is said to be a cyclic code if the cyclic shift of any codeword in $\mathcal{C}$ is still a codeword, i.e. $(a_{n-1}, a_0, a_1, \cdots, a_{n-2}) \in \mathcal{C}$ when $(a_0, a_1, \cdots, a_{n-1}) \in \mathcal{C}$. For convenience, we can regard any codeword $(a_0, a_1, \cdots, a_{n-1}) \in \mathcal{C}$ as a polynomial $a_0+a_1x+\cdots+a_{n-1}x^{n-1} \in \frac{\mathbb{F}_{p}[x]}{\langle x^n-1 \rangle}$. From \cite{1}, it is known that the cyclic code $\mathcal{C}$ of length $n$ over $\mathbb{F}_{p}$ is an ideal of the ring $\frac{\mathbb{F}_{p}[x]}{\langle x^n-1 \rangle}$. Since each ideal of $\frac{\mathbb{F}_{p}[x]}{\langle x^n-1 \rangle}$ is principal, then there is a monic polynomial $g(x)$ of lowest degree such that $\mathcal{C}=\langle g(x) \rangle$. The polynomial $g(x)$ is called the generator polynomial of $\mathcal{C}$. Additionally,  $h(x)=\frac{x^n-1}{g(x)}$ is called the check polynomial of the cyclic code $\mathcal{C}$. Note that the dual code $\mathcal{C}^{\perp}$ of $\mathcal{C}$ is generated by the reciprocal polynomial of $h(x)$.

 Cyclic codes have an extensive applications in communication systems, consumer electronics and data storage systems because of their efficient encoding and decoding algorithms\cite{2}-\cite{4}. Although it is easy to construct infinite families of binary cyclic codes with parameters $[n, \frac{n \pm 1}{2}]$ (where $n$ is odd), achieving good minimum distances for these codes is a significant challenge. There is a class of quadratic residue codes \cite{5} of length $n$ and dimension $\frac{n+1}{2}$, and the minimum distance $d$ have a square-root lower bound $\sqrt{n}$. As a generalization of quadratic residue codes, duadic codes were introduced and studied in \cite{6}-\cite{8}, where many properties were proved. Additionally, [8] provided all binary duadic codes of length up to 241. When the length of binary duadic codes $\mathcal{C}$ is prime power, the total number of these codes and their constructions were given in \cite{9,10}. 

Recently, Tang and Ding \cite{11} presented a family of binary duadic codes $\mathcal{C}$ with parameters $[2^m-1,2^{m-1},d]$, where $d \geq 2^{\frac{m-1}{2}}+1$ when $m \equiv{3} \pmod{4}$ and $d \geq 2^{\frac{m-1}{2}}+3$ when $m \equiv{1} \pmod{4}$. The polynomial
$$
g_{(i,m)}(x)=\prod_{j \in T_{(i,m)}} (x-\alpha^{j})
$$
is the generator polynomial of the code $\mathcal{C}$, where $\alpha$ is the generator of $\mathbb{F}_{2^m}^{*}$, $T_{(i,m)}=\{1\leq j \leq {2^m-2}: w_{2}(j) \equiv{i} \pmod{2}\}$ for $i\in\{0,1 \}$. Liu and Li et al. \cite{12} constructed two families of binary duadic codes of length $2^m-1$ and dimension $2^{m-1}$, its generator polynomial is 
$$
g_{(i_1, i_2, m)}(x)=\prod_{j \in T_{(i_1,i_2,m)}} (x-\alpha^{j}),
$$ 
where $i_1, i_2 \in \{0,1,2,3\}$ are distinct elements, $T_{(i_1,i_2, m)}=\{ 1\leq j \leq {2^m-2}:w_{2}(j) \equiv{i_1~\text{or}~i_2 } \pmod{4}  \}$. Moreover, they gave three families of binary cyclic codes of length $n=2^m-1$ and dimension $\frac{n-6}{3} \leq k \leq \frac{2(n+6)}{3}$, which have good lower bound on their minimum distance and contain distance-optimal codes. Finally, they put forward four open problems. Inspired by this work, Liu and Li et al. \cite{13} partially solved  open problem 4, which presented binary duadic codes and their related codes with generator polynomial
$$
g_{[6,m,S]}(x)=\prod_{j \in T_{[6,m,S]}}(x-\alpha^{j}),
$$
where $T_{[6,m,S]}=\{1\leq j \leq{2^m-2}:w_{2}(j) \pmod{6} \in S  \}$, $S\subsetneq \mathbb Z_{6}$ and $|S|=3$.

For lack of infinite family of ternary cyclic codes with parameters $[n, k\in \{\frac{n}{2}, \frac{(n \pm 2)}{2}\}, d \geq \sqrt{n}]$, Chen and Ding et al. \cite{14} constructed four such infinite families of ternary cyclic codes and their dual codes with a square-root-like lower bound. The generator polynomial is 
$$
g_{(i_1,i_2,m)}(x)=\prod_{
	\begin{array}{c}
	1\leq j \leq{3^m-2}\\
	 w_{3}(j) \pmod{4} \in\{i_1,i_2\}
 	\end{array}}(x-\beta^j),
$$
where $i_1, i_2\in \{0,1,2,3\}$ are distinct elements, $\beta$ is the generator of $\mathbb{F}_{3^m}^{*}$. 

Inspired by Tang-Ding codes, Shi and Tao et al. \cite{15} proposed an analogue for quaternary codes with generator polynomial 
$$
g_{(i,m)}(x)=\prod_{\begin{array}{c} 1\leq j \leq {4^m-2}\\ w_{2}(j) \equiv{i} \pmod{2} \end{array}} (x-\alpha^j),
$$
where $i \in \{0,1\}$. When $m$ is odd, they obtained a pair of odd-like duadic codes; when $m$ is even, they got a new family of LCD cyclic codes with rate close to $\frac{1}{2}$. After that, Sun and Li et al. \cite{16} studied $2^s$-ary Tang-Ding codes, where $s\geq 2$. They presented an infinite family of $2^s$-ary duadic codes with a square-root-like lower bound. For more cyclic codes with good parameters, reader can refer to Ref.\cite{17}-\cite{26}.  

\begin{open problem}\cite{12} 
Let $r \geq 6$ be an even integer. Find a subset $S$ of $\mathbb{Z}_{r}$ with $|S|=\frac{r}{2}$ such that $\mathcal{C}_{[r,m,S]}$ is a binary duadic code of length $n=2^m-1$ for infinitely many odd $m$. Determine the parameters of these duadic codes. 
\end{open problem}
	
In this paper, we will solve this problem. By using the BCH bound on cyclic codes, we give several families of binary duadic codes with parameters $[2^m-1, 2^{m-1}, d]$, where the lower bounds on their minimum distances $d$ are close to the square-root bound. In addition, we provide the parameters of the dual codes and extended codes of these binary duadic codes.

This paper is organized as follows. Section 2 gives some basic knowledge needed later. Section $3$ presents the construction of binary cyclic codes. Sections 4 and 5 construct several families of binary duadic codes and their dual codes and extended codes, which all have square-root-like lower bounds. Section 6 provides an example for $r=8$. Section 7 concludes this paper.

\dse{2~~Preliminaries}

In this section, we give some basic knowledge which will be employed later.

Let $n=2^m-1$ and $\mathbb{Z}_{n}=\{0,1,\cdots, n-1\}$. For any $s \in \mathbb{Z}_n$, the $2$-cyclotomic coset modulo $n$ is defined as 
$$
C_s=\{s\cdot 2^t \pmod{n}: 0\leq t \leq l_s-1\},
$$
where $l_s$ is the least positive integer such that $2^{l_s}\cdot s \equiv{s} \pmod{n}$. 

Let $\alpha$ be an $n$-th primitive root of unity in $\mathbb{F}_{2^m}$. Let $\mathcal{C}$ be a binary cyclic code of length $n$ with generator polynomial $g(x)$. Define the defining set of $\mathcal{C}$ as 
$$
T=\{0\leq i \leq {n-1}: g(\alpha^i)=0\}.
$$
Moreover, $T$ is the union of some $2$-cyclotomic cosets modulo $n$.

Let $S_1$ and $S_2$ be two subsets of $\mathbb{Z}_{n}$. Both $S_1$ and $S_2$ are the union of some 2-cyclotomic cosets modulo $n$. Let $\mathcal{C}_{1}$ and $\mathcal{C}_{2}$ be cyclic codes over $\mathbb{F}_{2}$ with defining sets $T_1=S_1$ and $T_2=S_2$, respectively. Then $\mathcal{C}_{1}$ and $\mathcal{C}_{2}$ form a pair of odd-like duadic codes if and only if 

$(1)$ $S_1$ and $S_2$ satisfy
$$
S_1 \bigcup S_2=\mathbb{Z}_{n}\setminus \{0\} ~\text{and}~ S_1 \bigcap S_2=\emptyset.
$$

$(2)$ There is a unit $\mu \in \mathbb{Z}_{n}$ such that
$$
S_1\cdot \mu=S_2 ~\text{and}~ S_2 \cdot \mu =S_1.
$$

From above, we know that $(S_1, S_2, \mu)$ is a splitting of $\mathbb{Z}_{n}$. Similarly, if $\widetilde{\mathcal{C}_1}$ and $\widetilde{\mathcal{C}_2}$ are cyclic codes over $\mathbb{F}_{2}$ with defining sets $\widetilde{T_1}=\{0\} \bigcup S_1$ and  
$\widetilde{T_2}=\{0\} \bigcup S_2$, then they form a pair of even-like duadic codes. By definition, $\mathcal{C}_{1}$, $\mathcal{C}_{2}$ have parameters $[n,\frac{n+1}{2}]$ and $\widetilde{\mathcal{C}_1}$, $\widetilde{\mathcal{C}_2}$ have parameters $[n,\frac{n-1}{2}]$.

For odd-like duadic codes, we have the following result \cite[Theorem 6.5.2]{5}.

\begin{theorem}(Square root bound)\label{theorem 1}
Let $\mathcal{C}_{1}$ and $\mathcal{C}_{2}$ be a pair of odd-like duadic codes of length $n$ over $\mathbb{F}_{2}$. Let $d_0$ be their (common) minimum odd weight. Then the following hold:

(1) $d_{0}^{2}\geq n$.

(2) If the splitting defining the duadic codes is given by $\mu=-1$, then $d_{0}^2-d_0+1 \geq n$.

(3) Suppose $d_{0}^2-d_0+1=n$, where $d_0 >2$, and assume that the splitting defining the duadic codes is given by $\mu=-1$. Then $d_0$ is the minimum weight of both $\mathcal{C}_{1}$ and $\mathcal{C}_{2}$.
		
\end{theorem}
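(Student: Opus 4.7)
\medskip
\noindent\textbf{Proof plan for Theorem \ref{theorem 1}.}

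For part (1), the plan is to work in the ring $R=\mathbb{F}_2[x]/\langle x^n-1\rangle$ and exploit the multiplicative structure of the splitting. Let $c(x)\in\mathcal{C}_1$ be an odd-like codeword of weight $d_0$ with support $I=\{i_1,\dots,i_{d_0}\}$. Because $\mu$ sends $S_1$ to $S_2$, the polynomial $c(x^{\mu^{-1}})$ lies in $\mathcal{C}_2$. Form the product $F(x)=c(x)\,c(x^{\mu^{-1}})$ in $R$. Since $\mathcal{C}_1\cdot \mathcal{C}_2\subseteq \mathcal{C}_1\cap \mathcal{C}_2$ (any $f\in\mathcal{C}_1$ vanishes on $S_1$ and any $g\in\mathcal{C}_2$ vanishes on $S_2$, so $fg$ vanishes on $S_1\cup S_2=\mathbb{Z}_n\setminus\{0\}$), and this intersection is the one-dimensional repetition code $\langle 1+x+\cdots+x^{n-1}\rangle$, the product $F$ equals either $0$ or the all-ones vector. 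Because $F(1)=c(1)c(1)=1\cdot 1=1$ (both factors are odd-like), $F$ is the all-ones vector of weight $n$. On the other hand, $F(x)=\sum_{i,j\in I}x^{i+\mu^{-1}j}$ can have at most $|I|^2=d_0^2$ nonzero terms after reducing mod $x^n-1$, so $d_0^2\ge n$.

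For part (2), I would redo the count above under the hypothesis $\mu=-1$ and exploit the collapse of the diagonal. Now $F(x)=\sum_{i,j\in I}x^{i-j}$. The diagonal pairs $i=j$ all contribute to the coefficient of $x^0$, giving $d_0\pmod 2=1$ (since odd-like codewords have odd weight). The off-diagonal pairs contribute to nonzero positions and there are $d_0(d_0-1)$ of them, hence they occupy at most $d_0(d_0-1)$ distinct positions. Thus $\mathrm{wt}(F)\le 1+d_0(d_0-1)=d_0^2-d_0+1$, and combined with $\mathrm{wt}(F)=n$ this yields $n\le d_0^2-d_0+1$.

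For part (3), the starting observation is that equality in (2) forces the inequalities in the counting argument to be equalities: all $d_0(d_0-1)$ values $i-j$ with $i\ne j$ in $I$ must be pairwise distinct modulo $n$, so $I$ is a planar $(n,d_0,1)$-difference set covering $\mathbb{Z}_n\setminus\{0\}$ exactly once. I would then rule out the existence of any codeword $e\in\mathcal{C}_1$ of (necessarily even) weight $w$ with $0<w<d_0$. The tools available are: (i) $\mathbf{1}\in \mathcal{C}_1\cap\mathcal{C}_2$, so $e+\mathbf{1}$ is an odd-like codeword of weight $n-w$ satisfying the bound of (2); (ii) for even-like $e$ one has $e(x)\,e(x^{-1})=0$ in $R$ (because the product lies in $\langle \mathbf{1}\rangle$ and vanishes at $1$), translating into the condition that every nonzero residue in $\mathbb{Z}_n$ is represented an \emph{even} number of times as a difference of two elements of the support of $e$; and (iii) the product $c(x)e(x)\in \tilde{\mathcal{C}}_1$ can be analyzed against the planar-difference-set structure of $I$. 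Combining these constraints with the hypothesis $d_0>2$ (used to avoid degenerate small cases where the difference-set argument collapses) yields the desired contradiction, so the minimum weight of $\mathcal{C}_1$ coincides with its minimum odd weight $d_0$; by the $\mu=-1$ symmetry the same conclusion holds for $\mathcal{C}_2$.

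The main obstacle is part (3): parts (1) and (2) reduce to a clean pigeonhole/support-counting in $R$, but ruling out small even-weight codewords requires a more delicate interaction between the rigid planar-difference-set structure of the minimum-odd-weight support and the parity constraints coming from $e$ being even-like. I would expect to spend the bulk of the work there, keeping $d_0>2$ in reserve to exclude trivial configurations.
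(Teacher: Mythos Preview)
The paper does not prove Theorem~\ref{theorem 1} at all; it is quoted verbatim from Huffman--Pless \cite[Theorem~6.5.2]{5} as background, with no argument supplied. So there is no ``paper's own proof'' to compare your proposal against.

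For what it is worth, your treatment of parts (1) and (2) is exactly the classical argument in \cite{5} and is correct as written. For part (3) you have identified the right starting point (equality in (2) forces the support $I$ of $c$ to be a planar $(n,d_0,1)$-difference set) and the right auxiliary object (a hypothetical even-like $e\in\mathcal C_1$ of weight $w<d_0$), but the contradiction is not actually derived in your sketch: items (i)--(iii) are listed without being combined into an argument. The textbook route is to look at $c(x)\,e(x^{-1})\in\mathcal C_1\mathcal C_2\subseteq\langle\mathbf 1\rangle$, note that it vanishes at $x=1$ and hence is identically $0$ in $R$, and then use the difference-set property of $I$ to read off from the vanishing of each coefficient a rigidity condition on the support of $e$ that is incompatible with $0<w<d_0$ (this is where $d_0>2$ enters). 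Your outline is compatible with this but stops short of executing it.
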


Now we consider extending odd-like duadic codes over $\mathbb{F}_{2}$. Define the extended code $\overline{\mathcal{C}}$ of $\mathcal{C}$ as 
$$
\overline{\mathcal{C}}=\{ (c_0, c_1, \cdots, c_{n-1}, c_n): (c_0, c_1, \cdots, c_{n-1})\in \mathcal{C}~\text{with}~ \sum_{i=0}^{n}c_i=0  \}.
$$

From \cite[Theorem 6.4.12 and Theorem 6.5.1]{5}, we have following results:

\begin{theorem}\label{theorem 2}
Let $\mathcal{C}_{1}$ and $\mathcal{C}_{2}$ be odd-like duadic codes of length $n$ over $\mathbb{F}_{2}$. Denote $\overline{\mathcal{C}_{1}}$ and $\overline{\mathcal{C}_{2}}$ are extended codes of $\mathcal{C}_{1}$ and $\mathcal{C}_{2}$, respectively. If $\mu=-1$ gives the splitting for  $\mathcal{C}_{1}$ and $\mathcal{C}_{2}$, then $\overline{\mathcal{C}_{1}}$ and $\overline{\mathcal{C}_{2}}$ are self-dual.  
\end{theorem}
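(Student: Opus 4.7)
The plan is to first establish the identity $\mathcal{C}_i^\perp = \widetilde{\mathcal{C}_i}$ at the level of defining sets, and then to leverage it to verify orthogonality of pairs of codewords in $\overline{\mathcal{C}_i}$ via a short parity-based case analysis on the underlying weights.

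First I would compute the defining set of $\mathcal{C}_1^\perp$. Using the standard fact that the dual of a cyclic code of length $n$ with defining set $T$ has defining set $\mathbb{Z}_n \setminus (-T)$, and combining it with the hypothesis $\mu = -1$ (so that $-S_1 = S_2$), the defining set of $\mathcal{C}_1^\perp$ is $\mathbb{Z}_n \setminus (-S_1) = \mathbb{Z}_n \setminus S_2 = \{0\} \cup S_1 = \widetilde{T_1}$, which gives $\mathcal{C}_1^\perp = \widetilde{\mathcal{C}_1}$. The symmetric argument yields $\mathcal{C}_2^\perp = \widetilde{\mathcal{C}_2}$.

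Next, since $\dim \mathcal{C}_1 = (n+1)/2$ and appending an overall parity bit preserves dimension, $\dim \overline{\mathcal{C}_1} = (n+1)/2$, which is exactly half of the extended length $n+1$. So it suffices to prove self-orthogonality, namely $c \cdot c' = 0$ in $\mathbb{F}_2$ for every pair $c, c' \in \overline{\mathcal{C}_1}$. Writing $c = (a, c_\infty)$ and $c' = (a', c'_\infty)$ with $a, a' \in \mathcal{C}_1$ and $c_\infty \equiv \mathrm{wt}(a) \pmod 2$, $c'_\infty \equiv \mathrm{wt}(a') \pmod 2$, the extended inner product reduces to $c \cdot c' = a \cdot a' + c_\infty c'_\infty$ in $\mathbb{F}_2$.

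Then I would split into cases on parities. If at least one of $a, a'$ has even weight, that codeword lies in $\widetilde{\mathcal{C}_1} = \mathcal{C}_1^\perp$, so $a \cdot a' = 0$ and at the same time $c_\infty c'_\infty = 0$. If both $a$ and $a'$ have odd weight, then $a + a' \in \mathcal{C}_1$ has even weight and hence belongs to $\widetilde{\mathcal{C}_1} = \mathcal{C}_1^\perp$; orthogonality with $a$ gives $a \cdot a' = a \cdot a = \mathrm{wt}(a) \equiv 1 \pmod 2$, which cancels $c_\infty c'_\infty = 1$. The same proof applies verbatim to $\overline{\mathcal{C}_2}$. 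The main obstacle is precisely this odd-odd case, where one must combine the binary identity $a \cdot a \equiv \mathrm{wt}(a) \pmod 2$ with the trick that $a + a' \in \widetilde{\mathcal{C}_1}$; the rest is routine defining-set bookkeeping.
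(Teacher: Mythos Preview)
Your argument is correct. Note, however, that the paper does not supply its own proof of this theorem: it simply quotes the result from Huffman--Pless \cite[Theorems 6.4.12 and 6.5.1]{5}. So there is no in-paper proof to compare against.

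For what it is worth, your proof is essentially the standard one found in that reference, specialized to $q=2$: first identify the even-like code $\widetilde{\mathcal{C}_i}$ with $\mathcal{C}_i^\perp$ via the defining-set computation (this is where the hypothesis $\mu=-1$ enters), then do the parity split on the extended inner product. The only cosmetic difference is that Huffman--Pless phrase the result over general $\mathbb{F}_q$ with an auxiliary scalar $\gamma$ satisfying $1+\gamma^2 n=0$, which over $\mathbb{F}_2$ collapses to your situation automatically since $n$ is odd.
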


\begin{theorem}\label{theorem 3}
Let $\mathcal{C}_{1}$ and $\mathcal{C}_{2}$ be odd-like duadic codes of length $n$ with splitting given by $\mu=-1$. Then $\overline{\mathcal{C}_{1}}$ and $\overline{\mathcal{C}_{2}}$ are self-dual doubly-even if $n \equiv{-1} \pmod{8}$ or self-dual singly-even if $n \equiv{1} \pmod{8}$. 

\end{theorem}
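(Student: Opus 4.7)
My plan is to leverage Theorem~\ref{theorem 2} for the self-duality and then pin down the doubly- vs.\ singly-even character by a weight-modulo-$4$ analysis. Self-duality already forces every codeword of $\overline{\mathcal{C}_1}$ and $\overline{\mathcal{C}_2}$ to have even weight, so the remaining question is whether \emph{all} weights are in fact divisible by $4$. By the $\mu=-1$ symmetry it suffices to work with $\overline{\mathcal{C}_1}$.

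The key tool I would use is the identity $\mathrm{wt}(u+v)=\mathrm{wt}(u)+\mathrm{wt}(v)-2|\mathrm{supp}(u)\cap\mathrm{supp}(v)|$ for binary vectors. In a self-orthogonal binary code, $u\cdot v=0$ forces the overlap $|\mathrm{supp}(u)\cap\mathrm{supp}(v)|$ to be even, giving $\mathrm{wt}(u+v)\equiv \mathrm{wt}(u)+\mathrm{wt}(v)\pmod 4$. Consequently the map
\[
\phi\colon \overline{\mathcal{C}_1}\longrightarrow\mathbb{F}_2,\qquad \phi(\bar c)=\tfrac12\mathrm{wt}(\bar c)\bmod 2,
\]
is well-defined and $\mathbb{F}_2$-linear, and $\overline{\mathcal{C}_1}$ is doubly-even if and only if $\phi\equiv 0$.

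Next I would test $\phi$ on the all-ones codeword. Since $0\notin S_1$, the polynomial $1+x+\cdots+x^{n-1}=\prod_{i\ne 0}(x-\alpha^i)$ is divisible by $g_1(x)$, so the all-ones vector $\mathbf{1}$ lies in $\mathcal{C}_1$; because $n$ is odd, the parity extension gives $\overline{\mathbf{1}}=(1,\ldots,1)$ of weight $n+1$. When $n\equiv 1\pmod 8$, $n+1\equiv 2\pmod 4$, so $\phi(\overline{\mathbf{1}})=1$ and $\overline{\mathcal{C}_1}$ is self-dual singly-even. When $n\equiv -1\pmod 8$, $n+1\equiv 0\pmod 8$, so $\phi(\overline{\mathbf{1}})=0$, which is a necessary but not yet sufficient condition.

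The main obstacle is completing the case $n\equiv -1\pmod 8$: one must show $\phi$ vanishes on all of $\overline{\mathcal{C}_1}$, not merely on $\overline{\mathbf{1}}$. I would reduce this to the even-like duadic code $\widetilde{\mathcal{C}_1}$ via the decomposition $\mathcal{C}_1=\widetilde{\mathcal{C}_1}\oplus\langle\mathbf{1}\rangle$, where $\widetilde{\mathcal{C}_1}$ is the even-weight subcode of $\mathcal{C}_1$; on $\widetilde{\mathcal{C}_1}$ the extension bit is $0$ so weights are preserved, and the problem reduces to showing every codeword of $\widetilde{\mathcal{C}_1}$ has weight divisible by $4$. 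Under the splitting $\mu=-1$ the code $\widetilde{\mathcal{C}_1}$ is self-orthogonal, so the same linearity argument applies and it suffices to verify that some generating set---for example, the cyclic shifts of the idempotent of $\widetilde{\mathcal{C}_1}$, which all share a common weight---consists of codewords of weight $\equiv 0\pmod 4$. Verifying this last congruence is the crux of the proof, and is where the condition $n\equiv -1\pmod 8$ is finally used; the slickest route is to invoke the Gleason--Prange theorem on extended duadic codes with $\mu=-1$, whose extra automorphism combined with the length constraint forces the required divisibility.
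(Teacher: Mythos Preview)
The paper does not prove this theorem at all: it is quoted verbatim from Huffman and Pless \cite[Theorem~6.5.1]{5}, so there is no in-paper argument to compare against. I will therefore just assess your proposal on its own.

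Your treatment of the singly-even case $n\equiv 1\pmod 8$ is clean and correct: the linearity of $\phi$ on a self-orthogonal binary code is standard, and exhibiting $\overline{\mathbf 1}$ with $\mathrm{wt}(\overline{\mathbf 1})=n+1\equiv 2\pmod 4$ immediately shows $\phi\not\equiv 0$.

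The genuine gap is in the doubly-even case $n\equiv -1\pmod 8$. You correctly reduce to showing that the even-like code $\widetilde{\mathcal C_1}$ is doubly-even, and you correctly note that by linearity it suffices to check a generating set. But then you stop: ``invoke the Gleason--Prange theorem'' is not an argument. Gleason--Prange produces an extra automorphism of the extended code; you give no mechanism by which an automorphism forces weights to be divisible by $4$, and indeed none is apparent.

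What actually closes the gap is an elementary lifting argument that you are missing. For any $c(x)\in\widetilde{\mathcal C_1}$, the $\mu=-1$ splitting gives $c(\alpha^i)c(\alpha^{-i})=0$ for every $i\in\mathbb Z_n$, hence $c(x)c(x^{-1})=0$ in $\mathbb F_2[x]/(x^n-1)$. Lifting to $\mathbb Z[x]/(x^n-1)$, write $c(x)c(x^{-1})=\sum_k a_k x^k$ with $a_k=|\{i:c_i=c_{i+k}=1\}|$; then every $a_k$ is even, $a_0=\mathrm{wt}(c)$, $a_k=a_{n-k}$, and $\sum_k a_k=\mathrm{wt}(c)^2$. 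Hence
\[
\mathrm{wt}(c)\bigl(\mathrm{wt}(c)-1\bigr)=2\sum_{k=1}^{(n-1)/2}a_k\equiv 0\pmod 4,
\]
and since $\mathrm{wt}(c)$ is even this forces $\mathrm{wt}(c)\equiv 0\pmod 4$. Thus $\widetilde{\mathcal C_1}$ is doubly-even \emph{for all odd $n$}, and the $n\bmod 8$ dichotomy enters only through $\mathrm{wt}(\overline{\mathbf 1})=n+1$, exactly as your decomposition $\mathcal C_1=\widetilde{\mathcal C_1}\oplus\langle\mathbf 1\rangle$ anticipated. Replace the Gleason--Prange hand-wave with this computation and your proof is complete.
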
 

The following lemmas will be used later.

\begin{lemma}\label{lemma 1}
	Let $a \geq 2$ and $l, m$ be positive integers, then 
	$$
	\gcd(a^m-1, a^l-1)=a^{\gcd(m,l)}-1.
	$$
\end{lemma}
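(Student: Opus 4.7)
The plan is to prove the identity by mimicking the Euclidean algorithm on the exponents. Without loss of generality, assume $m \geq l$, and write $m = ql + r$ with $0 \leq r < l$. The key algebraic identity I would exploit is
$$
a^m - 1 = a^{ql+r} - 1 = a^r\bigl(a^{ql} - 1\bigr) + \bigl(a^r - 1\bigr).
$$
Since $a^l - 1$ divides $a^{ql} - 1$ (using the factorization $a^{ql} - 1 = (a^l - 1)\bigl(a^{(q-1)l} + a^{(q-2)l} + \cdots + 1\bigr)$), the first summand $a^r(a^{ql}-1)$ is divisible by $a^l - 1$. Hence
$$
\gcd(a^m - 1,\, a^l - 1) = \gcd(a^l - 1,\, a^r - 1).
$$

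This reduction step mirrors exactly one step of the Euclidean algorithm on the pair $(m,l)$: the pair of exponents $(m, l)$ becomes $(l, r)$, precisely because $\gcd(m, l) = \gcd(l, r)$. I would then iterate this reduction. After finitely many steps the Euclidean algorithm terminates with the pair $(\gcd(m,l),\, 0)$, at which point the corresponding gcd of the ``$a^{\,\cdot} - 1$'' values is
$$
\gcd\bigl(a^{\gcd(m,l)} - 1,\, a^{0} - 1\bigr) = \gcd\bigl(a^{\gcd(m,l)} - 1,\, 0\bigr) = a^{\gcd(m,l)} - 1.
$$
To make the iteration rigorous one could either invoke strong induction on $\min(m,l)$ or simply run through the Euclidean algorithm formally.

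There is essentially no serious obstacle here; the lemma is entirely elementary. The only point that needs a moment of care is verifying the divisibility $(a^l - 1) \mid (a^{ql} - 1)$, which follows from the standard geometric-series factorization. An alternative proof route (if one wishes to avoid the iteration) is to show the two divisibilities $\bigl(a^{\gcd(m,l)} - 1\bigr) \mid \gcd(a^m - 1, a^l - 1)$ and vice versa separately: the forward direction is immediate from $\gcd(m,l) \mid m$ and $\gcd(m,l) \mid l$, while the reverse direction uses Bézout's identity $\gcd(m,l) = um - vl$ for suitable nonnegative integers (or a variant with signs) combined again with the same Euclidean-style identity above. I would present the single-paragraph Euclidean-algorithm argument since it is the cleanest.
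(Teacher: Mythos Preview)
Your argument is correct and is the standard Euclidean-algorithm proof of this classical identity. Note, however, that the paper does not actually supply a proof of this lemma: it is stated as a known preliminary fact (alongside the BCH bound, which is cited from \cite{1}) and used without justification, so there is no ``paper's own proof'' to compare against. Your write-up would serve perfectly well as the omitted proof.
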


\begin{lemma}\label{lemma 2}\cite{1} (BCH bound)
	Let $g(x)\in \mathbb{F}_{q}[x]$ be the generator polynomial of a cyclic code $\mathcal{C}$ with length $n$. If $g(\alpha^{l+ic})$=0 with respect to $n$-th primitive root of unity $\alpha$, where $\gcd(n,c)=1$ and $0\leq i \leq {d_0-2}$, then $d\geq d_0$.  
	
\end{lemma}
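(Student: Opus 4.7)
The plan is to prove the BCH bound by the classical Vandermonde argument. Suppose, for contradiction, that there is a nonzero codeword $c(x)=\sum_{j=0}^{n-1} c_j x^j\in\mathcal{C}$ of Hamming weight $w$ with $1\le w\le d_0-1$. Let $j_1,j_2,\ldots,j_w\in\{0,1,\ldots,n-1\}$ be the distinct positions of the nonzero coordinates, and write $c_{j_k}$ for the corresponding nonzero coefficients.

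Since $g(x)\mid c(x)$ and $g(\alpha^{l+ic})=0$ for $0\le i\le d_0-2$, evaluating $c(x)$ at these $d_0-1$ roots gives
\begin{equation*}
\sum_{k=1}^{w} c_{j_k}\,\alpha^{(l+ic)j_k}=0, \qquad i=0,1,\ldots,d_0-2.
\end{equation*}
Picking the first $w$ of these equations (permissible since $w\le d_0-1$) yields a homogeneous linear system $M\vec{v}=\vec{0}$, where $\vec{v}=(c_{j_1}\alpha^{lj_1},\ldots,c_{j_w}\alpha^{lj_w})^{T}$ and $M$ is the $w\times w$ matrix whose $(i,k)$-entry is $(\alpha^{cj_k})^{i}$. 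Thus $M$ is a Vandermonde matrix in the nodes $\alpha^{cj_1},\ldots,\alpha^{cj_w}$.

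The crux of the argument is to verify that these nodes are pairwise distinct. Because $\alpha$ has order $n$, we have $\alpha^{cj_k}=\alpha^{cj_{k'}}$ if and only if $n\mid c(j_k-j_{k'})$. Here is where the hypothesis $\gcd(n,c)=1$ enters: $c$ is a unit modulo $n$, forcing $j_k\equiv j_{k'}\pmod n$, and since $0\le j_k,j_{k'}\le n-1$ this means $j_k=j_{k'}$. Hence the nodes are distinct, so $\det M=\prod_{1\le k<k'\le w}(\alpha^{cj_{k'}}-\alpha^{cj_k})\neq 0$.

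Consequently $M$ is invertible, $\vec{v}=\vec{0}$, and every $c_{j_k}\alpha^{lj_k}=0$, forcing $c_{j_k}=0$ for all $k$ — contradicting the choice of $j_1,\ldots,j_w$ as nonzero positions. Therefore no nonzero codeword of weight less than $d_0$ exists, so $d\ge d_0$. The only place that any subtlety arises is justifying the distinctness of the Vandermonde nodes, which is precisely where the coprimality assumption $\gcd(n,c)=1$ is used; the rest is a direct linear-algebra computation.
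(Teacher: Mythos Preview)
Your proof is correct and is the standard Vandermonde argument for the BCH bound. Note, however, that the paper does not supply its own proof of this lemma; it is simply quoted from \cite{1} as a known result, so there is nothing in the paper to compare your argument against.
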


\dse{3~~The construction of binary duadic   codes}

For any integer $0 \leq i \leq {2^m-1}$, the 2-adic expansion of $i$ is 
$$
i=i_0+i_1\cdot 2+i_2\cdot 2^2+\cdots+i_{m-1}\cdot 2^{m-1},
$$
where $i_j\in \{0,1\}$ for $0\leq j \leq {m-1}$. And define $w_{2}(j)=\sum\limits_{j=0}^{m-1}i_j$.

Let $r$ be even and $n=2^m-1$ for odd $m \geq 3$. Let $S\subsetneq \mathbb{Z}_{r}$ and $|S|=\frac{r}{2}$. Define
$$
T_{[r,m,S]}=\{1\leq j \leq {n-1}: w_{2}(j)\pmod{r} \in S \}.
$$
Note that $T_{[r,m,S]}$ is the union of some 2-cyclotomic cosets modulo $n$.

Let $\mathbb{F}_{2^m}^{*}=\langle \alpha \rangle$. Denote $\mathcal{C}_{[r,m,S]}$ is the binary cyclic code of length $n$ with generator polynomial 
$$
g_{[r,m,S]}(x)=\prod_{i \in T_{[r,m,S]}} (x-\alpha^i).
$$

In this paper, we denote $S=\{ i_1,i_2,\cdots, i_{\frac{r}{2}}\}$, where $i_1, i_2, \cdots, i_{\frac{r}{2}} \in \mathbb{Z}_{r}$ are distinct elements. As $m$ is odd, we have $m \equiv{t} \pmod{r}$ for odd $t$. Furthermore, let $S'=\mathbb{Z}_{r}\setminus{S}=\{i_1',i_2',\cdots, i_{\frac{r}{2}}'\}$ and $i_{j}+i_{j}'\equiv{t} \pmod{r}$ for $1\leq j \leq {\frac{r}{2}}$, where $i_{1}, i_2, \cdots, i_{\frac{r}{2}}, i_{1}', i_2', \cdots, i_{\frac{r}{2}}' $ are pairwise distinct. We will prove that $\mathcal{C}_{[r,m,S]}$ and $\mathcal{C}_{[r,m,S']}$ form a pair of odd-like duadic codes under certain conditions, where $r$ is even and $m \geq 3$ is odd. Moreover, the dual codes and extended codes of these duadic codes are investigated.

\dse{4~~Some auxiliary results}

To give lower bounds on the minimum distances of the above duadic codes, we need some auxiliary results in the following. 

\begin{lemma}\label{lemma 3}
Let $m \equiv{t} \pmod{r}$ with odd $t\neq 3$ and even $r>2$. When $S$ contains elements $\frac{t-1}{2}, \frac{t+r-1}{2}$ and $t-1$, $S'$ contains $\frac{t+1}{2}$ and $\frac{t+r+1}{2}$, we have

(1) If $v=2^{\frac{m-1}{2}}-1$, then $\gcd(v,n)=1$ and 
$$
\{av: 1\leq a \leq 2^{\frac{m-1}{2}}+2\}\subseteq T_{[r,m,S]}.
$$

(2) If $v=2^{\frac{m+1}{2}}-1$, then $\gcd(v,n)=1$ and
$$
\{av: 1\leq a \leq 2^{\frac{m-1}{2}}+2\}\subseteq T_{[r,m,S']}.
$$
	
\end{lemma}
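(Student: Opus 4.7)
The strategy is to compute $w_2(av \bmod n)$ explicitly for each $a$ in the prescribed range and verify that the resulting weight, reduced modulo $r$, lies in $S$ (respectively $S'$). The coprimality claims follow from Lemma~\ref{lemma 1}: writing $m = 2s+1$ with $s = (m-1)/2$, the oddness of $m$ forces $\gcd(s, 2s+1) = \gcd(s+1, 2s+1) = 1$, so both gcds equal $2 - 1 = 1$.

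For part~(1), set $v = 2^s - 1$. The estimate $(2^s+2)(2^s-1) < 2^{2s+1} - 1 = n$ shows that no modular reduction is needed anywhere in the range $1 \leq a \leq 2^s + 2$. The plan is to use the decomposition
\[
a(2^s - 1) = (a-1)\cdot 2^s + (2^s - a), \qquad 1 \leq a \leq 2^s,
\]
whose two summands occupy disjoint $s$-bit windows, together with the complementation identity $w_2(a-1) + w_2(2^s - a) = s$, to conclude $w_2(av) = s$. Direct binary expansions dispatch the two remaining boundary values: $a = 2^s + 1$ gives $av = 2^{2s} - 1$ with $w_2 = 2s$, while $a = 2^s + 2$ gives $av = 2^{2s} + 2^s - 2$ with $w_2 = s$. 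Since $2s \equiv t-1 \pmod{r}$ forces $s \bmod r \in \{(t-1)/2,\,(t+r-1)/2\}$, all three weight values land in $\{(t-1)/2,\,(t+r-1)/2,\,t-1\} \subseteq S$, proving the inclusion.

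Part~(2) is parallel with $s' = (m+1)/2 = s+1$ and $v = 2^{s'} - 1$, but the two boundary subcases now demand reduction modulo $n = 2^{2s'-1} - 1$. For $1 \leq a \leq 2^{s'-1}$ one still has $av < n$, and the same decomposition yields $w_2(av) = s'$. For $a = 2^{s'-1} + 1$, direct expansion gives $av = 2^m + 2^{s'-1} - 1$, and subtracting $n$ collapses this to $av \bmod n = 2^{s'-1}$ with $w_2 = 1$. For $a = 2^{s'-1} + 2$, a similar computation yields $av \bmod n = 2^{s'} + 2^{s'-1} - 1$ with $w_2 = s'$. Since $2s' \equiv t+1 \pmod r$ puts $s' \bmod r \in \{(t+1)/2,\,(t+r+1)/2\} \subseteq S'$, and since the duadic pairing $i_j + i_j' \equiv t \pmod r$ introduced in Section~3 forces $1 \in S'$ as the partner of $t-1 \in S$, every weight again lies in $S'$, completing the argument.

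The principal obstacle is the subcase $a = 2^{s'-1} + 1$ in part~(2): the raw product $a(2^{s'}-1)$ has 2-weight $s'$, but reduction modulo $2^m - 1$ triggers carries that cancel the string of $s'-1$ ones and leave just a single 1-bit. Recognizing that this seemingly stray value $w_2 = 1$ still belongs to $S'$ requires invoking the duadic pairing between $S$ and $S'$, which is the thread tying the conclusion back to the hypothesis $t-1 \in S$.
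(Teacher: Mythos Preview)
Your proof is correct and follows the paper's overall strategy of computing $w_2(av)$ case by case, but two points of difference are worth recording.

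For the generic range $1\le a\le 2^{(m-1)/2}$ you use the single-line decomposition $a(2^s-1)=(a-1)\,2^s+(2^s-a)$ together with the bit-complement identity $w_2(a-1)+w_2(2^s-a)=s$, whereas the paper writes $a=2^{\ell}\bar a$ with $\bar a$ odd and expands $\bar a\cdot v$ term by term. Both routes give $w_2(av)=s$ (respectively $s'$); yours avoids the reduction to odd $\bar a$ and is marginally cleaner.

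The substantive difference is the boundary case $a=2^{(m-1)/2}+1$ in part~(2). The paper records $av=2^m+2^{(m-1)/2}-1$ and reads off the integer $2$-weight $(m+1)/2$; but since this value exceeds $n$, what matters for membership in $T_{[r,m,S']}$ is the weight of the residue $av\bmod n=2^{(m-1)/2}$, which is $1$. You handle this correctly and then invoke the duadic pairing $i_j+i_j'\equiv t\pmod r$ from Section~3 to deduce $1\in S'$ from the hypothesis $t-1\in S$. This is precisely the step that makes the hypothesis $t-1\in S$ do work in part~(2), and your treatment closes a small gap left by the paper's computation of the unreduced weight.
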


\begin{proof}
(1) If $v=2^{\frac{m-1}{2}}-1$, then $\gcd(v,n)=2^{\gcd(\frac{m-1}{2},m)}-1=1$ by Lemma \ref{lemma 1}. When $a=2^{\frac{m-1}{2}}+2$, we have
\[
av=(2^{\frac{m-1}{2}}+2)(2^{\frac{m-1}{2}}-1)=2^{m-1}+2^{\frac{m-1}{2}}-2=2(2^{m-2}+2^{\frac{m-3}{2}}-1).
\]
Then, $w_{2}(av)=\frac{m-1}{2}$. Due to $m \equiv{t} \pmod{r}$, $m \equiv{t} \pmod{2r}$ or $m \equiv{t+r} \pmod{2r}$. Thus, $w_{2}(av) \pmod{r} ={\frac{t-1}{2}}$ or ${\frac{t+r-1}{2}}$ $\in S$. When $a=2^{\frac{m-1}{2}}+1$, $av=(2^{\frac{m-1}{2}}+1)(2^{\frac{m-1}{2}}-1)=2^{m-1}-1$. Then, $w_{2}(av)=m-1 \equiv{t-1} \pmod{r} \in S$. When $a=2^{\frac{m-1}{2}}$, we have $w_{2}(av)=w_{2}(v)=\frac{m-1}{2}$ $\pmod{r} ={\frac{t-1}{2}}$ or ${\frac{t+r-1}{2}}$ $\in S$. 

Now we consider that $1\leq a \leq {2^{\frac{m-1}{2}}-1}$. Let $a=2^l\cdot\overline{a}$ with odd $\overline{a}$, where $l\geq 0$ is an integer. Then we have $1\leq \overline{a} \leq {2^{\frac{m-1}{2}}-1}$ and denote $\overline{a}$ as
$$
\overline{a}=\sum_{i=0}^{\frac{m-3}{2}}a_i\cdot 2^i, ~~a_i\in\{0,1\}.
$$ 
Since $\overline{a}$ is odd, $a_0=1$. It is known that
$$
\overline{a}v=(\sum_{i=0}^{\frac{m-3}{2}}a_i\cdot 2^i)(2^{\frac{m-1}{2}}-1)=\sum_{i=1}^{\frac{m-3}{2}}a_{i}2^{i+\frac{m-1}{2}}+\sum_{i=0}^{\frac{m-3}{2}}(1-a_i)2^i+1.
$$
So, $w_{2}(av)=w_{2}(\overline{a}v)=w_{2}(\overline{a})-1+\frac{m-1}{2}-w_{2}(\overline{a})+1=\frac{m-1}{2} \pmod{r}= \frac{t-1}{2}$ or $\frac{t+r-1}{2} \in S$. 

As a result, we have $\{av: 1\leq a \leq 2^{\frac{m-1}{2}}+2\}\subseteq T_{[r,m,S]}$.

(2) If $v=2^{\frac{m+1}{2}}-1$, it is easy to deduce that $\gcd(v,n)=1$. When $a=2^{\frac{m-1}{2}}+2$, $av=(2^{\frac{m-1}{2}}+2)(2^{\frac{m+1}{2}}-1)=2(2^{m-1}+2^{\frac{m-1}{2}}+2^{\frac{m-3}{2}}-1)$. Then we have $w_{2}(av)=\frac{m+1}{2} \pmod{r}=\frac{t+1}{2}$ or $\frac{t+r+1}{2}$$\in S'$. When $a=2^{\frac{m-1}{2}}+1$, $av=(2^{\frac{m-1}{2}}+1)(2^{\frac{m+1}{2}}-1)=2^m+2^{\frac{m-1}{2}}-1$. Then, $w_{2}(av)=\frac{m+1}{2} \pmod{r}=\frac{t+1}{2}$ or $\frac{t+r+1}{2} \in S'$. When $a=2^{\frac{m-1}{2}}$, we have $w_{2}(av)=w_2(v)=\frac{m+1}{2} \in S'$. Next, we consider that $1\leq a \leq {2^{\frac{m-1}{2}}-1}$. We denote $a=2^l\cdot \overline{a}$, where $l\geq 0$ is an integer and $\overline{a}$ is odd. Then $1\leq \overline{a} \leq{2^{\frac{m-1}{2}}-1}$ and denote $\overline{a}$ as $\overline{a}=\sum_{i=0}^{\frac{m-3}{2}}a_i\cdot 2^i$, where $a_i\in\{0,1\}$. By calculating
$$
\overline{a}v=(\sum_{i=0}^{\frac{m-3}{2}}a_i\cdot 2^i)(2^{\frac{m+1}{2}}-1)=\sum_{i=1}^{\frac{m-3}{2}}a_{i}2^{i+\frac{m+1}{2}}+\sum_{i=0}^{\frac{m-3}{2}}(1-a_i)2^i+2^{\frac{m-1}{2}}+1,
$$
we can get $w_{2}(av)=w_{2}(\overline{a}v)=w_{2}(\overline{a})-1+\frac{m-1}{2}-w_{2}(\overline{a})+2=\frac{m+1}{2} \pmod{r}=\frac{t+1}{2}$ or $\frac{t+r+1}{2} \in S'$.

From above, we have $\{av: 1\leq a \leq 2^{\frac{m-1}{2}}+2\}\subseteq T_{[r,m,S']}$.
\qedhere
\end{proof}

Similar to the proof of Lemma \ref{lemma 3}, we have the following three lemmas.

\begin{lemma} \label{lemma 4}
Let $m \equiv{t} \pmod{r}$ with odd $t\neq 1$. When $S$ contains elements $\frac{t-1}{2}, \frac{t+r-1}{2}$ and $1$, $S'$ contains $\frac{t+1}{2}$ and $\frac{t+r+1}{2}$, we have

(1) If $v=2^{\frac{m-1}{2}}-1$, then $\gcd(v,n)=1$ and 
$$
\{av: 1\leq a \leq 2^{\frac{m-1}{2}}\} \subseteq T_{[r,m,S]}.
$$ 

(2) If $v=2^{\frac{m+1}{2}}-1$, then $\gcd(v,n)=1$ and 
$$
\{av: 1\leq a \leq 2^{\frac{m-1}{2}}\} \subseteq T_{[r,m,S']}.
$$ 

\end{lemma}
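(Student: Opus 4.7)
The plan is to mirror the proof of Lemma \ref{lemma 3} step by step, adapting it to the smaller range $1 \le a \le 2^{(m-1)/2}$. The structural fact is that on this narrower range every product $av$ turns out to have the same $2$-adic weight as $v$ itself, namely $(m-1)/2$ in part~(1) and $(m+1)/2$ in part~(2); so only the residues of these weights modulo $r$ need to belong to $S$ (respectively $S'$).

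For part~(1), I would first establish coprimality via Lemma \ref{lemma 1}, using $\gcd((m-1)/2, m) = 1$ (which holds because $m$ is odd) to get $\gcd(v,n) = 2^1 - 1 = 1$. Next I would dispose of $a = 2^{(m-1)/2}$ directly: the value $av = 2^{m-1} - 2^{(m-1)/2}$ has binary weight exactly $(m-1)/2$. For $1 \le a \le 2^{(m-1)/2} - 1$, I would write $a = 2^{\ell} \bar{a}$ with $\bar{a}$ odd (so $1 \le \bar{a} \le 2^{(m-1)/2}-1$), invoke the cyclotomic-coset invariance $w_2(av) = w_2(\bar{a}v)$, and apply the bit-cancellation identity
\[
\bar{a} v = \sum_{i=1}^{(m-3)/2} a_i 2^{i+(m-1)/2} + \sum_{i=0}^{(m-3)/2}(1-a_i) 2^i + 1,
\]
which is valid because $a_0 = 1$, to read off $w_2(\bar{a}v) = (m-1)/2$. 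Since $m \equiv t \pmod r$ forces $m \equiv t$ or $m \equiv t+r \pmod{2r}$, the residue $(m-1)/2 \pmod r$ equals either $(t-1)/2$ or $(t+r-1)/2$, both of which lie in $S$ by hypothesis. This yields $\{av : 1 \le a \le 2^{(m-1)/2}\} \subseteq T_{[r,m,S]}$.

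Part~(2) then follows by the very same strategy with $v = 2^{(m+1)/2} - 1$. Lemma \ref{lemma 1} again gives $\gcd(v,n) = 1$, since $\gcd((m+1)/2, m) = 1$ (as $m$ is odd), and a parallel bit-expansion identity produces the constant weight $(m+1)/2 \equiv (t+1)/2$ or $(t+r+1)/2 \pmod r$, both of which lie in $S'$ by hypothesis.

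The only step that demands real care is the binary-expansion identity above, and since the corresponding bookkeeping was already carried out in the proof of Lemma \ref{lemma 3}, I expect no genuine obstacle here: the present argument is essentially a restriction of that proof to a shorter range of $a$, and the borderline case $a = 2^{(m-1)/2}+1$ that required the additional hypothesis $t-1 \in S$ in Lemma \ref{lemma 3} simply does not arise, which is precisely why $t-1$ is replaced by the weaker condition stated here.
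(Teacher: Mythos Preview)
Your proposal is correct and matches the paper's intended argument exactly: the paper itself omits the proof and simply states that Lemma~\ref{lemma 4} follows ``similar to the proof of Lemma~\ref{lemma 3}'', and your write-up is precisely that restriction of the Lemma~\ref{lemma 3} computation to the shorter range $1\le a\le 2^{(m-1)/2}$. One small clarification: the extra hypothesis $1\in S$ is not actually invoked in establishing the inclusion (only $\tfrac{t-1}{2},\tfrac{t+r-1}{2}\in S$ are used); it appears in the statement to pin down which splitting pair $(S,S')$ is under consideration in the later theorems, not because the weight argument needs it.
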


\begin{lemma} \label{lemma 5}
Let $m \equiv{t} \pmod{r}$ with odd $t\neq 3$. When $S$ contains elements $\frac{t-1}{2}, \frac{t+r+1}{2}$ and $t-1$, $S'$ contains $\frac{t+1}{2}$ and $\frac{t+r-1}{2}$, we have

(1) When $m \equiv{t} \pmod{2r}$,

(1.1) If $v=2^{\frac{m-1}{2}}-1$, then $\gcd(v,n)=1$ and 
$$
\{av: 1\leq a \leq 2^{\frac{m-1}{2}}+2\}\subseteq T_{[r,m,S]}.
$$

(1.2) If $v=2^{\frac{m+1}{2}}-1$, then $\gcd(v,n)=1$ and
$$
\{av: 1\leq a \leq 2^{\frac{m-1}{2}}+2\}\subseteq T_{[r,m,S']}.
$$

(2) When $m \equiv{t+r} \pmod{2r}$,

(2.1) If $v=2^{\frac{m+1}{2}}-1$, then $\gcd(v,n)=1$ and 
$$
\{av: 1\leq a \leq 2^{\frac{m-1}{2}}\}\subseteq T_{[r,m,S]}.
$$

(2.2) If $v=2^{\frac{m-1}{2}}-1$, then $\gcd(v,n)=1$ and
$$
\{av: 1\leq a \leq 2^{\frac{m-1}{2}}\}\subseteq T_{[r,m,S']}.
$$
\end{lemma}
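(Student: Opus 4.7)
The plan is to mirror the proof of Lemma~\ref{lemma 3}, using a case split on $m\bmod 2r$ to select the right $v$ in each sub-case. First, $\gcd(v,n)=1$ follows from Lemma~\ref{lemma 1}: since $m$ is odd, $\gcd\!\bigl(\tfrac{m\pm 1}{2},m\bigr)=1$, hence $\gcd\!\bigl(2^{(m\pm 1)/2}-1,\,2^m-1\bigr)=2^1-1=1$.

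The key observation is the reduction of $\tfrac{m-1}{2}$ and $\tfrac{m+1}{2}$ modulo $r$. When $m\equiv t\pmod{2r}$, one has $\tfrac{m-1}{2}\equiv\tfrac{t-1}{2}\in S$ and $\tfrac{m+1}{2}\equiv\tfrac{t+1}{2}\in S'$; when $m\equiv t+r\pmod{2r}$, the classes flip to $\tfrac{m-1}{2}\equiv\tfrac{t+r-1}{2}\in S'$ and $\tfrac{m+1}{2}\equiv\tfrac{t+r+1}{2}\in S$. This dictates the choice of $v$: in Case~(1) take $v=2^{(m-1)/2}-1$ for $T_{[r,m,S]}$ and $v=2^{(m+1)/2}-1$ for $T_{[r,m,S']}$, while in Case~(2) the two assignments swap.

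Within each sub-case the weight calculation is essentially that of Lemma~\ref{lemma 3}. I would handle the boundary values $a\in\{2^{(m-1)/2},\,2^{(m-1)/2}+1,\,2^{(m-1)/2}+2\}$ individually: at $a=2^{(m-1)/2}+1$ with $v=2^{(m-1)/2}-1$ the product is $av=2^{m-1}-1$, of weight $m-1\equiv t-1\in S$; the other boundary values yield weight $w_2(v)$. For $1\le a\le 2^{(m-1)/2}-1$ I would factor $a=2^{\ell}\bar a$ with $\bar a$ odd, expand $\bar a=\sum a_i 2^i$, and invoke the telescoping identity from the proof of Lemma~\ref{lemma 3} to obtain $w_2(av)=w_2(v)$ uniformly. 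This weight then reduces modulo $r$ into $S$ or $S'$ according to the key observation.

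The delicate point, and the reason Case~(2)'s range is capped at $2^{(m-1)/2}$ rather than extended to $2^{(m-1)/2}+2$ as in Case~(1), is the behaviour of the boundary value $a=2^{(m-1)/2}+1$. In (2.2) with $v=2^{(m-1)/2}-1$ this choice yields $av=2^{m-1}-1$ of weight $t-1$, which by hypothesis lies in $S$ and hence in $T_{[r,m,S]}$ rather than the required $T_{[r,m,S']}$; a parallel obstruction blocks the extension in (2.1), since $av$ then exceeds $n$ and its reduction modulo $n$ equals $2^{(m-1)/2}$, whose weight is $1$ and need not lie in $S\pmod r$. Thus the hypothesis $t-1\in S$ is precisely what permits Case~(1) to reach $2^{(m-1)/2}+2$ and what obstructs the same extension in Case~(2).
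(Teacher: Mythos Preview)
Your proposal is correct and is exactly the argument the paper intends: the paper gives no separate proof of Lemma~\ref{lemma 5}, merely stating that it follows ``similar to the proof of Lemma~\ref{lemma 3}'', and your case split on $m\bmod 2r$ together with the telescoping weight computation reproduces that argument faithfully. One small point worth making explicit: in sub-case~(1.2), the boundary $a=2^{(m-1)/2}+1$ with $v=2^{(m+1)/2}-1$ gives $av\equiv 2^{(m-1)/2}\pmod n$ of weight~$1$ (not $w_2(v)$), and this lies in $S'$ precisely because $t-1\in S$ forces $1\in S'$ under the pairing $i_j+i_j'\equiv t\pmod r$; you observe this phenomenon in your discussion of~(2.1) but should note it also settles~(1.2).
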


\begin{lemma} \label{lemma 6}
	Let $m \equiv{t} \pmod{r}$ with odd $t\neq 1$. When $S$ contains elements $\frac{t-1}{2}, \frac{t+r+1}{2}$ and $1$, $S'$ contains $\frac{t+1}{2}$ and $\frac{t+r-1}{2}$, we have
	
	(1) When $m \equiv{t} \pmod{2r}$,
	
	(1.1) If $v=2^{\frac{m-1}{2}}-1$, then $\gcd(v,n)=1$ and 
	$$
	\{av: 1\leq a \leq 2^{\frac{m-1}{2}}\}\subseteq T_{[r,m,S]}.
	$$
	
	(1.2) If $v=2^{\frac{m+1}{2}}-1$, then $\gcd(v,n)=1$ and
	$$
	\{av: 1\leq a \leq 2^{\frac{m-1}{2}}\}\subseteq T_{[r,m,S']}.
	$$
	
	(2) When $m \equiv{t+r} \pmod{2r}$,
	
	(2.1) If $v=2^{\frac{m+1}{2}}-1$, then $\gcd(v,n)=1$ and 
	$$
	\{av: 1\leq a \leq 2^{\frac{m-1}{2}}+2\}\subseteq T_{[r,m,S]}.
	$$
	
	(2.2) If $v=2^{\frac{m-1}{2}}-1$, then $\gcd(v,n)=1$ and
	$$
	\{av: 1\leq a \leq 2^{\frac{m-1}{2}}+2\}\subseteq T_{[r,m,S']}.
	$$
	
\end{lemma}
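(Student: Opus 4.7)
The plan is to mimic the proofs of Lemmas \ref{lemma 3}, \ref{lemma 4}, and \ref{lemma 5}: explicitly compute $w_2(av)$ for every $a$ in the stated range, reduce modulo $r$, and verify membership in $S$ or $S'$. The proof splits into two main cases according to whether $m \equiv t \pmod{2r}$ or $m \equiv t+r \pmod{2r}$, and each contains two sub-cases indexed by the choice of $v$. I would first dispatch the gcd statements via Lemma \ref{lemma 1}: since $m$ is odd, $\gcd((m-1)/2, m) = \gcd((m+1)/2, m) = 1$, hence $\gcd(2^{(m-1)/2}-1, n) = \gcd(2^{(m+1)/2}-1, n) = 1$.

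For the weight computations, I would reuse the bookkeeping from Lemma \ref{lemma 3} verbatim. For a general $a$ with $1 \leq a \leq 2^{(m-1)/2}-1$, write $a = 2^l \bar a$ with $\bar a$ odd, expand $\bar a$ in binary, and compute $\bar a v$ by the shift-and-subtract identity used there; the resulting binary weight telescopes to $(m-1)/2$ when $v = 2^{(m-1)/2}-1$ and to $(m+1)/2$ when $v = 2^{(m+1)/2}-1$. The boundary values $a = 2^{(m-1)/2}$, $2^{(m-1)/2}+1$, and $2^{(m-1)/2}+2$ are handled by the same three explicit calculations already carried out in Lemma \ref{lemma 3}; in particular $a = 2^{(m-1)/2}+1$ paired with $v = 2^{(m-1)/2}-1$ gives $av = 2^{m-1}-1$, so $w_2(av) = m - 1$.

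The final step is the mod-$r$ reduction. In case $(1)$, $m \equiv t \pmod{2r}$ forces $(m-1)/2 \equiv (t-1)/2 \in S$ and $(m+1)/2 \equiv (t+1)/2 \in S'$, which covers the full range $1 \leq a \leq 2^{(m-1)/2}$ in both sub-cases $(1.1)$ and $(1.2)$. In case $(2)$, $m \equiv t+r \pmod{2r}$ instead forces $(m-1)/2 \equiv (t+r-1)/2$ and $(m+1)/2 \equiv (t+r+1)/2$, so the pairing of $v$ with the target set must be swapped: $v = 2^{(m+1)/2}-1$ is paired with $S$ (which contains $(t+r+1)/2$) and $v = 2^{(m-1)/2}-1$ is paired with $S'$ (which contains $(t+r-1)/2$). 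The extended range $+2$ in case $(2)$ further brings in the boundary weight $m - 1 \equiv t - 1 \pmod r$, and here I would invoke the hypothesis $1 \in S$ together with the splitting relation $i_j + i_j' \equiv t \pmod r$ from Section $3$ to conclude $t - 1 \in S'$, matching sub-case $(2.2)$; the analogous boundary in sub-case $(2.1)$ produces weight $(m+1)/2 \equiv (t+r+1)/2 \pmod r$, which lies in $S$ directly by hypothesis.

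The main obstacle is purely bookkeeping: four sub-cases, three boundary $a$-values each, and the $S \leftrightarrow S'$ swap induced by the parity of $m \pmod{2r}$, together with the one subtle step of transferring the hypothesis $1 \in S$ into the claim $t - 1 \in S'$ via the defining splitting. No arithmetic idea beyond what is already present in Lemma \ref{lemma 3} is needed.
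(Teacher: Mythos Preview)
Your plan is correct and matches the paper, which omits the proof entirely and simply cites similarity to Lemma~\ref{lemma 3}. One small correction to your last paragraph: in sub-case~(2.1) the boundary $a=2^{(m-1)/2}+1$ paired with $v=2^{(m+1)/2}-1$ gives $av=2^m+2^{(m-1)/2}-1\equiv 2^{(m-1)/2}\pmod n$, hence $w_2=1$ (not $(m+1)/2$), so membership in $T_{[r,m,S]}$ follows directly from the hypothesis $1\in S$ rather than from $(t+r+1)/2\in S$.
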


\begin{remark}
Note that the elements in $S$ are all modulo $r$.
\end{remark}

\dse{5~~Parameters of the codes $\mathcal{C}_{[r,m,S]}$ and their related codes}

In this section, we study the parameters of the codes $\mathcal{C}_{[r,m,S]}$ and their related codes, where $m \geq 3$. First, we consider the case that $m\equiv{t} \pmod{r}$ for odd $t\neq 3$ and even $r>2$, the set $S$ contains elements $\frac{t-1}{2}$, $\frac{t+r-1}{2}$ and $t-1$ or $\frac{t+1}{2}$, $\frac{t+r+1}{2}$ and $1$, the parameters of $\mathcal{C}_{[r,m,S]}$ and $\mathcal{C}_{[r,m,S']}$ are given as follows.

\begin{theorem}\label{theorem 4}
Let $m\equiv{t} \pmod{r}$ with odd $t\neq 3$ and even $r>2$. When $S$ contains elements $\frac{t-1}{2}$, $\frac{t+r-1}{2}$ and $t-1$ or $\frac{t+1}{2}$, $\frac{t+r+1}{2}$ and 1, the codes $\mathcal{C}_{[r,m,S]}$ and $\mathcal{C}_{[r,m,S']}$ form a pair of odd-like duadic codes with parameters $[2^m-1,2^{m-1},d\geq 2^{\frac{m-1}{2}}+3]$.

\end{theorem}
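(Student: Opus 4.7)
The plan is to verify three claims in sequence: that $\mathcal{C}_{[r,m,S]}$ and $\mathcal{C}_{[r,m,S']}$ form a pair of odd-like duadic codes with splitting multiplier $\mu=-1$; that their common dimension equals $2^{m-1}$; and that the minimum distance satisfies $d\geq 2^{\frac{m-1}{2}}+3$. For the duadic property I would first observe that $S\cup S'=\mathbb{Z}_r$ with $S\cap S'=\emptyset$ immediately gives $T_{[r,m,S]}\cup T_{[r,m,S']}=\mathbb{Z}_n\setminus\{0\}$ and $T_{[r,m,S]}\cap T_{[r,m,S']}=\emptyset$. The heart of this step is to exhibit $\mu=-1$ as a splitting. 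The key observation is the bit-complement identity: for $j\in\{1,\dots,n-1\}$ the binary expansion of $n-j=(2^m-1)-j$ is the bitwise complement of that of $j$, whence $w_{2}(n-j)=m-w_{2}(j)$. Reducing modulo $r$ and combining $m\equiv t\pmod r$ with the pairing $i_j+i_j'\equiv t\pmod r$ fixed in Section 3, one obtains $-T_{[r,m,S]}\equiv T_{[r,m,S']}\pmod n$, which is precisely the required splitting condition.

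Once the splitting is established, the odd-like duadic structure forces $|T_{[r,m,S]}|=|T_{[r,m,S']}|=\frac{n-1}{2}$, so $\dim\mathcal{C}_{[r,m,S]}=n-\frac{n-1}{2}=2^{m-1}$. For the minimum distance under the first hypothesis, namely $S\supseteq\{\frac{t-1}{2},\frac{t+r-1}{2},t-1\}$, I would invoke Lemma \ref{lemma 3} directly: it produces an integer $v$ with $\gcd(v,n)=1$ such that $\{v,2v,\dots,(2^{\frac{m-1}{2}}+2)v\}\subseteq T_{[r,m,S]}$. Applying the BCH bound of Lemma \ref{lemma 2} with $l=c=v$ and $d_0-2=2^{\frac{m-1}{2}}+1$ then yields $d\geq 2^{\frac{m-1}{2}}+3$, and part (2) of Lemma \ref{lemma 3} gives the analogous bound for $\mathcal{C}_{[r,m,S']}$.

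For the second hypothesis, $S\supseteq\{\frac{t+1}{2},\frac{t+r+1}{2},1\}$, a one-line check using $i+i'\equiv t\pmod r$ shows that $S'$ then contains $\{\frac{t-1}{2},\frac{t+r-1}{2},t-1\}$, so $S'$ now satisfies the first-hypothesis condition. Applying Lemma \ref{lemma 3} with the roles of $S$ and $S'$ interchanged bounds the minimum distance of $\mathcal{C}_{[r,m,S']}$, and since the $\mu=-1$ splitting exhibits $\mathcal{C}_{[r,m,S]}$ and $\mathcal{C}_{[r,m,S']}$ as permutation equivalent via $i\mapsto -i\pmod n$, the same bound transfers to $\mathcal{C}_{[r,m,S]}$. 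The principal technical content sits already inside Lemma \ref{lemma 3}; the only genuinely delicate step in Theorem \ref{theorem 4} proper is the bit-complement identity $w_{2}(n-j)=m-w_{2}(j)$, which simultaneously unlocks the $\mu=-1$ splitting and, through the induced permutation equivalence, allows the two alternative hypotheses on $S$ to be handled by a single invocation of Lemma \ref{lemma 3}.
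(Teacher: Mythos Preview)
Your proof is correct and follows essentially the same approach as the paper: both establish the splitting via the bit-complement identity $w_{2}(n-j)=m-w_{2}(j)$, read off the dimension from the odd-like duadic structure, and obtain the distance bound by feeding Lemma~\ref{lemma 3} into the BCH bound. You are slightly more explicit than the paper in treating the second hypothesis on $S$ (swapping the roles of $S$ and $S'$ and invoking the permutation equivalence from $\mu=-1$), but this is precisely what the paper does implicitly when it asserts that $\mathcal{C}_{[r,m,S]}$ and $\mathcal{C}_{[r,m,S']}$ share the same minimum distance.
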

	
\begin{proof}
From the previous discussions, we know that $S=\{ i_1,i_2,\cdots, i_{\frac{r}{2}}\}$ and $S'=\{i_1', i_2', \cdots, i_{\frac{r}{2}}' \}$, where $i_j+i_{j}'\equiv{t} \pmod{r}$ for $1\leq j \leq \frac{r}{2}$. Since $w_{2}(i)
=m-w_{2}(n-i)$ for $1\leq i \leq {n-1}$, then $i \in T_{[r,m,S]}$ if and only if $n-i \in T_{[r,m,S']}$. Then $T_{[r,m,S]}$ and $T_{[r,m,S']}$ partition $\mathbb{Z}_{n}\setminus\{0\}$ and 
$$
T_{[r,m,S]}=-T_{[r,m,S']}~\text{and}~T_{[r,m,S']}=-T_{[r,m,S]}.
$$

Therefore, the codes $\mathcal{C}_{[r,m,S]}$ and $\mathcal{C}_{[r,m,S']}$ form a pair of odd-like duadic codes with length $n=2^m-1$ and dimension $\frac{n+1}{2}=2^{m-1}$. Moreover, they have the same minimum distance $d$.

Next, we need to prove the lower bounds on minimum distance $d$. If $v=2^{\frac{m-1}{2}}-1$, it is easy to deduce that $\gcd(v,n)=1$ by Lemma \ref{lemma 1}. Let $\overline{v}$ be the integer that satisfies $v\overline{v}\equiv{1} \pmod{n}$. Define $\gamma=\alpha^{\overline{v}}$. From Lemma \ref{lemma 3}, it is known that the defining set of $\mathcal{C}_{[r,m,S]}$ relative to $\gamma$ includes the set $\{1,2,\cdots,2^{\frac{m-1}{2}}+2\}$. Hence, we can get the minimum distance $d\geq  2^{\frac{m-1}{2}}+3$ from the BCH bound on cyclic codes. If $v=2^{\frac{m+1}{2}}-1$, we also can obtain the same results. 
\qedhere
\end{proof}

Under the above conditions, the parameters of the dual codes $\mathcal{C}_{[r,m,S]}^{\perp}$ and $\mathcal{C}_{[r,m,S']}^{\perp}$ are provided in the following.

\begin{theorem}\label{theorem5}
Let $m\equiv{t} \pmod{r}$ with odd $t\neq 3$ and even $r>2$. When $S$ contains elements $\frac{t-1}{2}$, $\frac{t+r-1}{2}$ and $t-1$ or $\frac{t+1}{2}$, $\frac{t+r+1}{2}$ and 1, the dual codes $\mathcal{C}_{[r,m,S]}^{\perp}$ and $\mathcal{C}_{[r,m,S']}^{\perp}$ form a pair of even-like duadic codes with parameters $[2^m-1,2^{m-1}-1,d^{\perp}\geq 2^{\frac{m-1}{2}}+4]$.
\end{theorem}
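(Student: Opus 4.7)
The plan is to pass from the primal codes of Theorem~\ref{theorem 4} to their duals using the standard formula for the defining set of the dual, then extend the BCH argument by a single extra root at $0$. For a binary cyclic code $\mathcal{C}$ with defining set $T \subseteq \mathbb{Z}_n$, the dual $\mathcal{C}^{\perp}$ has defining set $\{-i \bmod n : i \in \mathbb{Z}_n \setminus T\}$. In the course of proving Theorem~\ref{theorem 4} it was established that $T_{[r,m,S]}$ and $T_{[r,m,S']}$ are disjoint with union $\mathbb{Z}_n \setminus \{0\}$, and that $-T_{[r,m,S']} = T_{[r,m,S]}$. Feeding these into the formula gives
\[
T^{\perp}_{[r,m,S]} = \{0\} \cup T_{[r,m,S]}, \qquad T^{\perp}_{[r,m,S']} = \{0\} \cup T_{[r,m,S']}.
\]
These are exactly the defining sets of the even-like duadic codes associated to the splitting $(T_{[r,m,S]},\, T_{[r,m,S']},\, \mu = -1)$ already exhibited in Theorem~\ref{theorem 4}, so $\mathcal{C}_{[r,m,S]}^{\perp}$ and $\mathcal{C}_{[r,m,S']}^{\perp}$ form a pair of even-like duadic codes, and the cardinality $|T^{\perp}_{[r,m,S]}| = 2^{m-1}$ yields the dimension $n - 2^{m-1} = 2^{m-1} - 1$.

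For the lower bound on $d^{\perp}$, I would recycle Lemma~\ref{lemma 3}: under either branch of the hypothesis (applied directly, or with the roles of $S$ and $S'$ swapped), there exists $v \in \{2^{\frac{m-1}{2}} - 1,\, 2^{\frac{m+1}{2}} - 1\}$ with $\gcd(v,n) = 1$ such that $\{a v : 1 \le a \le 2^{\frac{m-1}{2}} + 2\} \subseteq T_{[r,m,S]}$. Since $0 = 0 \cdot v$ now also lies in $T^{\perp}_{[r,m,S]}$, the dual defining set contains the arithmetic progression
\[
\{a v : 0 \le a \le 2^{\frac{m-1}{2}} + 2\}
\]
of length $2^{\frac{m-1}{2}} + 3$ with common difference coprime to $n$, and the BCH bound (Lemma~\ref{lemma 2}) delivers $d^{\perp} \ge 2^{\frac{m-1}{2}} + 4$. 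The bound for $\mathcal{C}_{[r,m,S']}^{\perp}$ is identical, or follows from the standard fact that members of a duadic pair share the same minimum distance.

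The only genuine subtlety is confirming that adjoining the extra root at $0$ really contributes one additional unit to the BCH bound, so that the jump from $2^{\frac{m-1}{2}} + 3$ (as in Theorem~\ref{theorem 4}) to $2^{\frac{m-1}{2}} + 4$ is correct; this reduces to counting the $d_{0} - 1 = 2^{\frac{m-1}{2}} + 3$ consecutive terms $l + ic$ with $l = 0$ and $c = v$. Once this is checked, the theorem reduces to a clean corollary of Theorem~\ref{theorem 4} and Lemma~\ref{lemma 3}, with all of the heavy 2-adic weight combinatorics already done there.
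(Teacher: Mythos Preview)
Your proposal is correct and matches the paper's argument almost exactly: both compute the dual defining sets as $\{0\}\cup T_{[r,m,S]}$ and $\{0\}\cup T_{[r,m,S']}$ and read off the even-like duadic pair and the dimension. The only minor difference is in how the extra $+1$ in the distance bound is justified: the paper observes that $\mathcal{C}_{[r,m,S]}^{\perp}$ is the even-weight subcode of $\mathcal{C}_{[r,m,S]}$, so $d^{\perp}$ is even and hence at least $2^{\frac{m-1}{2}}+4$ (since $2^{\frac{m-1}{2}}+3$ is odd), whereas you explicitly extend the BCH arithmetic progression by adjoining the root at $0$; both routes are standard and equally valid.
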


\begin{proof}
It is easy to know that the defining sets of $\mathcal{C}_{[r,m,S]}^{\perp}$ and $\mathcal{C}_{[r,m,S']}^{\perp}$ relative to $\alpha$ are $\{0\}\bigcup T_{[r,m,S]}$ and $\{0\}\bigcup T_{[r,m,S']}$, respectively. Then,  $\mathcal{C}_{[r,m,S]}^{\perp}$ and $\mathcal{C}_{[r,m,S']}^{\perp}$ are the even-weight subcodes of $\mathcal{C}_{[r,m,S]}$ and $\mathcal{C}_{[r,m,S']}$, respectively. Hence, we can get the desired conclusion.

\qedhere
\end{proof}

Next, the parameters of extended codes $\overline{\mathcal{C}_{[r,m,S]}}$ and 
$\overline{\mathcal{C}_{[r,m,S']}}$ are given.

\begin{theorem}\label{theorem 6}
Let $m\equiv{t} \pmod{r}$ with odd $t\neq 3$ and even $r>2$. When $S$ contains elements $\frac{t-1}{2}$, $\frac{t+r-1}{2}$ and $t-1$ or $\frac{t+1}{2}$, $\frac{t+r+1}{2}$ and 1, the extended codes $\overline{\mathcal{C}_{[r,m,S]}}$ and $\overline{\mathcal{C}_{[r,m,S']}}$ have parameters $[2^m,2^{m-1},\overline{d}\geq 2^{\frac{m-1}{2}}+4]$. Moreover, they are self-dual and doubly-even.

\end{theorem}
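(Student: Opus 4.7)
The plan is to deduce the three assertions of Theorem~\ref{theorem 6} --- parameters, self-duality, and the doubly-even property --- by stitching together Theorems~\ref{theorem 2}, \ref{theorem 3}, \ref{theorem 4}, and~\ref{theorem5} with an elementary weight-parity argument on the overall parity bit. Length and dimension are automatic: appending the overall parity check to the $[2^m-1,2^{m-1}]$ code $\mathcal{C}_{[r,m,S]}$ yields a $[2^m,2^{m-1}]$ code, and likewise for $\mathcal{C}_{[r,m,S']}$.

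For self-duality, the key observation is that the splitting of $\mathbb{Z}_n\setminus\{0\}$ into $T_{[r,m,S]}$ and $T_{[r,m,S']}$ is given by $\mu=-1$; this was already established inside the proof of Theorem~\ref{theorem 4} via the identity $w_2(i)=m-w_2(n-i)$, which yields $T_{[r,m,S]}=-T_{[r,m,S']}$. With $\mu=-1$ in hand, Theorem~\ref{theorem 2} applies verbatim and shows that both $\overline{\mathcal{C}_{[r,m,S]}}$ and $\overline{\mathcal{C}_{[r,m,S']}}$ are self-dual.

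For the doubly-even claim I would invoke Theorem~\ref{theorem 3}. Since $m\geq 3$ is odd, $2^m$ is divisible by $8$, so $n=2^m-1\equiv -1\pmod{8}$, and Theorem~\ref{theorem 3} then forces the two extended codes into the doubly-even alternative rather than the singly-even one.

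The only substantive piece is the minimum-distance bound $\overline{d}\geq 2^{\frac{m-1}{2}}+4$, which I would handle by splitting each nonzero codeword of $\mathcal{C}_{[r,m,S]}$ according to the parity of its Hamming weight. An even-weight codeword retains its weight upon extension, and by Theorem~\ref{theorem5} every even-weight codeword of $\mathcal{C}_{[r,m,S]}$ already has weight at least $2^{\frac{m-1}{2}}+4$. An odd-weight codeword picks up a trailing parity bit equal to $1$, so its extended weight exceeds its original weight by one and is therefore at least $(2^{\frac{m-1}{2}}+3)+1 = 2^{\frac{m-1}{2}}+4$ by Theorem~\ref{theorem 4}. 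The same argument applies verbatim to $\mathcal{C}_{[r,m,S']}$, and combining the two bounds yields the stated inequality. I do not anticipate any genuine obstacle: the proof is essentially an assembly of previously established pieces, the only slightly non-trivial point being the check that $n\equiv -1\pmod 8$, which is immediate once one recalls that $m\geq 3$.
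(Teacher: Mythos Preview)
Your proposal is correct and follows essentially the same route as the paper: both invoke Theorem~\ref{theorem 4} to identify the splitting multiplier as $\mu=-1$, then apply Theorems~\ref{theorem 2} and~\ref{theorem 3} together with $n=2^m-1\equiv -1\pmod 8$ (from $m\ge 3$) to obtain self-duality and the doubly-even property. The one place you are more explicit than the paper is the minimum-distance bound: the paper simply writes ``the remaining proof is same as Theorem~\ref{theorem 4}'', whereas you spell out the parity split and use Theorem~\ref{theorem5} for the even-weight codewords, which makes the passage from $d\ge 2^{(m-1)/2}+3$ to $\overline{d}\ge 2^{(m-1)/2}+4$ transparent.
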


\begin{proof}
By Theorem \ref{theorem 4}, we know that $(T_{[r,m,S]}, T_{[r,m,S']}, -1)$ is a 
splitting of $\mathbb{Z}_{2^m-1}$ and the codes $\mathcal{C}_{[r,m,S]}$ and $\mathcal{C}_{[r,m,S']}$ form a pair of odd-like duadic codes. Then the extended codes $\overline{\mathcal{C}_{[r,m,S]}}$ and $\overline{\mathcal{C}_{[r,m,S']}}$ of $\mathcal{C}_{[r,m,S]}$ and $\mathcal{C}_{[r,m,S']}$ are self-dual from Theorem \ref{theorem 2}. Since $m \geq 3$, then $n=2^m-1\equiv{-1}\pmod{8}$. By Theorem \ref{theorem 3}, it is known that the extended codes $\overline{\mathcal{C}_{[r,m,S]}}$ and $\overline{\mathcal{C}_{[r,m,S']}}$ are doubly-even. The remaining proof is same as Theorem \ref{theorem 4}.  
\end{proof}

When the set $S$ satisfies the other conditions, using the similar method as Theorems \ref{theorem 4}-\ref{theorem 6} and from Lemmas \ref{lemma 4}-\ref{lemma 6}, we can obtain the parameters of the codes $\mathcal{C}_{[r,m,S]}$, $\mathcal{C}_{[r,m,S']}$ and their dual codes as well as the extended codes.

\begin{theorem}\label{theorem 7}
Let $m \equiv{t} \pmod{r}$ with odd $t\neq 1$. When $S$ contains elements $\frac{t-1}{2}$, $\frac{t+r-1}{2}$ and $1$, or $\frac{t+1}{2}$, $\frac{t+r+1}{2}$ and $t-1$, we have

(1) The codes $\mathcal{C}_{[r,m,S]}$ and $\mathcal{C}_{[r,m,S']}$ form a pair of odd-like duadic codes with parameters $[2^m-1, 2^{m-1},d\geq 2^{\frac{m-1}{2}}+1]$. 

(2) The dual codes $\mathcal{C}_{[r,m,S]}^{\perp}$ and $\mathcal{C}_{[r,m,S']}^{\perp}$ form a pair of even-like duadic codes with parameters $[2^m-1, 2^{m-1}-1,d^{\perp}\geq 2^{\frac{m-1}{2}}+2]$. 

(3) The extended codes $\overline {\mathcal{C}_{[r,m,S]}}$ and $\overline{\mathcal{C}_{[r,m,S']}}$ have parameters $[2^m, 2^{m-1},\overline{d}\geq 2^{\frac{m-1}{2}}+4]$. Moreover, they are self-dual and doubly-even.

\end{theorem}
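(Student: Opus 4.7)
The plan is to mirror the argument of Theorems \ref{theorem 4}--\ref{theorem 6}, substituting Lemma \ref{lemma 4} for Lemma \ref{lemma 3} throughout. For part (1), I first check the duadic structure exactly as in Theorem \ref{theorem 4}: since $w_2(i) + w_2(n-i) = m$ for $1 \leq i \leq n-1$ and the pairing $i_j + i_j' \equiv t \pmod r$ matches $S$ with $S'$, we get $i \in T_{[r,m,S]}$ iff $n-i \in T_{[r,m,S']}$; combined with $S \cup S' = \mathbb{Z}_r$ and $S \cap S' = \emptyset$, this gives the splitting $(T_{[r,m,S]}, T_{[r,m,S']}, -1)$ of $\mathbb{Z}_n$, so the two codes form an odd-like duadic pair of length $n = 2^m - 1$ and dimension $2^{m-1}$. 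For the lower bound, Lemma \ref{lemma 4} guarantees that with $v = 2^{(m-1)/2} - 1$ we have $\gcd(v, n) = 1$ and $\{av : 1 \leq a \leq 2^{(m-1)/2}\} \subseteq T_{[r,m,S]}$; picking an $n$-th primitive root $\gamma$ with $\gamma^v = \alpha$ places $\{1, 2, \ldots, 2^{(m-1)/2}\}$ in the defining set of $\mathcal{C}_{[r,m,S]}$ relative to $\gamma$, so the BCH bound (Lemma \ref{lemma 2}) yields $d \geq 2^{(m-1)/2} + 1$; the same argument with $v = 2^{(m+1)/2} - 1$ handles $\mathcal{C}_{[r,m,S']}$.

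For part (2), the defining sets of $\mathcal{C}_{[r,m,S]}^{\perp}$ and $\mathcal{C}_{[r,m,S']}^{\perp}$ relative to $\alpha$ are $\{0\} \cup T_{[r,m,S]}$ and $\{0\} \cup T_{[r,m,S']}$, so these are the even-weight subcodes and form the even-like duadic pair of dimension $2^{m-1} - 1$. Adjoining $0$ to the arithmetic progression supplied by Lemma \ref{lemma 4} gives $\{0, v, 2v, \ldots, 2^{(m-1)/2} v\}$, i.e.\ $2^{(m-1)/2} + 1$ consecutive integers in the $\gamma$-defining set, and the BCH bound yields $d^{\perp} \geq 2^{(m-1)/2} + 2$.

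For part (3), Theorem \ref{theorem 2} gives self-duality of the extended codes (since the splitting has $\mu = -1$), and Theorem \ref{theorem 3} upgrades this to doubly-even because $n = 2^m - 1 \equiv -1 \pmod 8$ for $m \geq 3$. Every odd-weight codeword of $\mathcal{C}_{[r,m,S]}$ has weight $\geq 2^{(m-1)/2} + 1$, so its extension has weight $\geq 2^{(m-1)/2} + 2$; every even-weight codeword lies in $\mathcal{C}_{[r,m,S]}^{\perp}$ and hence has weight $\geq 2^{(m-1)/2} + 2$ by part (2). Thus $\overline{d} \geq 2^{(m-1)/2} + 2$, and doubly-evenness rounds $\overline{d}$ up to the next multiple of $4$, which is $2^{(m-1)/2} + 4$. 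The main step requiring care is the bookkeeping between the two alternative descriptions of $S$ in the hypothesis: the roles of $S$ and $S'$ swap and the choice between $v = 2^{(m-1)/2} - 1$ and $v = 2^{(m+1)/2} - 1$ alternates accordingly, but Lemma \ref{lemma 4} is symmetric enough in these parameters that the numerical bounds agree in each case.
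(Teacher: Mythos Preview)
Your proposal is correct and follows essentially the same approach as the paper. The paper does not give a standalone proof of this theorem; it simply states that one obtains it by repeating the method of Theorems~\ref{theorem 4}--\ref{theorem 6} with Lemma~\ref{lemma 4} in place of Lemma~\ref{lemma 3}, which is precisely what you do (and in part~(3) you actually supply more detail on the doubly-even rounding step than the paper's terse ``the remaining proof is same as Theorem~\ref{theorem 4}'').
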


\begin{theorem}\label{theorem 8}
	
	Let $m \equiv{t} \pmod{r}$ with odd $t\neq 3$. When $S$ contains elements $\frac{t-1}{2}$, $\frac{t+r+1}{2}$ and $t-1$, or $\frac{t+1}{2}$, $\frac{t+r-1}{2}$ and $1$, we have

(1) The codes $\mathcal{C}_{[r,m,S]}$ and $\mathcal{C}_{[r,m,S']}$ form a pair of odd-like duadic codes with parameters $[2^m-1, 2^{m-1},d]$, where

$$
d \geq \left\{
\begin{aligned}
&2^{\frac{m-1}{2}}+3, ~~~~&\mbox{if}&~~ m \equiv{t} \pmod{2r},\\
&2^{\frac{m-1}{2}}+1, ~~~~~&\mbox{if}&~~ m \equiv{t+r} \pmod{2r}.\\
\end{aligned}
\right.
$$

(2) The dual codes $\mathcal{C}_{[r,m,S]}^{\perp}$ and $\mathcal{C}_{[r,m,S']}^{\perp}$ form a pair of even-like duadic codes with parameters $[2^m-1, 2^{m-1}-1,d^{\perp}]$, where

$$
d^{\perp} \geq \left\{
\begin{aligned}
&2^{\frac{m-1}{2}}+4, ~~~~&\mbox{if}&~~ m \equiv{t} \pmod{2r},\\
&2^{\frac{m-1}{2}}+2, ~~~~~&\mbox{if}&~~ m \equiv{t+r} \pmod{2r}.\\
\end{aligned}
\right.
$$

(3) The extended codes $\overline {\mathcal{C}_{[r,m,S]}}$ and $\overline{\mathcal{C}_{[r,m,S']}}$ have parameters $[2^m, 2^{m-1},\overline{d}\geq 2^{\frac{m-1}{2}}+4]$. Moreover, they are self-dual and doubly-even.

\end{theorem}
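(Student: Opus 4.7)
The plan is to mirror the proofs of Theorems~\ref{theorem 4}--\ref{theorem 6} and substitute Lemmas~\ref{lemma 5} and~\ref{lemma 6} for Lemma~\ref{lemma 3}, while tracking a case split according to whether $m\equiv t\pmod{2r}$ or $m\equiv t+r\pmod{2r}$. The splitting step is identical to Theorem~\ref{theorem 4}: the identity $w_{2}(i)+w_{2}(n-i)=m$, together with $m\equiv t\pmod r$ and the pairing $i_j+i_j'\equiv t\pmod r$ coming from the partition $\mathbb{Z}_r=S\sqcup S'$, forces $i\in T_{[r,m,S]}\Longleftrightarrow n-i\in T_{[r,m,S']}$. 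Hence $T_{[r,m,S]}$ and $T_{[r,m,S']}$ partition $\mathbb{Z}_n\setminus\{0\}$ and are negatives of one another, so $\mathcal{C}_{[r,m,S]}$ and $\mathcal{C}_{[r,m,S']}$ form a pair of odd-like duadic codes with splitting $\mu=-1$, length $n=2^m-1$ and dimension $2^{m-1}$; in particular they share a common minimum distance $d$.

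For the lower bound on $d$ in part~(1), I split according to the sub-cases of Lemma~\ref{lemma 5}. Consider first the branch in which $S\supseteq\{\frac{t-1}{2},\frac{t+r+1}{2},t-1\}$. If $m\equiv t\pmod{2r}$, Lemma~\ref{lemma 5}(1.1) supplies $v=2^{(m-1)/2}-1$ with $\gcd(v,n)=1$ and $\{av:1\le a\le 2^{(m-1)/2}+2\}\subseteq T_{[r,m,S]}$; letting $\bar v$ be the inverse of $v$ modulo $n$ and $\gamma=\alpha^{\bar v}$, the zeros of the generator polynomial include $\gamma,\gamma^2,\ldots,\gamma^{2^{(m-1)/2}+2}$, and the BCH bound (Lemma~\ref{lemma 2}) gives $d\ge 2^{(m-1)/2}+3$. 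If instead $m\equiv t+r\pmod{2r}$, Lemma~\ref{lemma 5}(2.1) only produces a progression of length $2^{(m-1)/2}$, which yields $d\ge 2^{(m-1)/2}+1$. The other branch, $S\supseteq\{\frac{t+1}{2},\frac{t+r-1}{2},1\}$, is handled symmetrically via Lemma~\ref{lemma 6}, or equivalently by observing that $S'$ then has the form of the first branch and invoking the duadic equality of minimum distances.

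Parts~(2) and~(3) follow the pattern of Theorems~\ref{theorem5} and~\ref{theorem 6}. The defining sets of $\mathcal{C}_{[r,m,S]}^{\perp}$ and $\mathcal{C}_{[r,m,S']}^{\perp}$ relative to $\alpha$ are $\{0\}\cup T_{[r,m,S]}$ and $\{0\}\cup T_{[r,m,S']}$, so these duals are precisely the even-weight subcodes of the duadic pair. Since for odd $m\ge 3$ both $2^{(m-1)/2}+3$ and $2^{(m-1)/2}+1$ are odd, the even-weight subcodes have minimum distance at least one larger, giving the claimed bounds on $d^{\perp}$. For the extended codes, Theorem~\ref{theorem 2} combined with $\mu=-1$ yields self-duality, and since $n=2^m-1\equiv-1\pmod 8$ for odd $m\ge 3$, Theorem~\ref{theorem 3} upgrades this to doubly-even. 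Extending adds a single parity bit, so $\overline{d}\ge\min(d+1,d^{\perp})$, and the doubly-even condition (extended weights are multiples of $4$) rounds the resulting bound up to the uniform value $2^{(m-1)/2}+4$ in both residue classes. The only real obstacle is the bookkeeping needed to match each branch of $S$ and each residue class of $m\bmod 2r$ to the correct case of Lemma~\ref{lemma 5} or~\ref{lemma 6}; once this is done, every remaining step is a direct invocation of a result already established in the paper.
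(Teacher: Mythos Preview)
Your proposal is correct and follows exactly the route the paper sketches: it merely remarks that Theorem~\ref{theorem 8} is obtained by the method of Theorems~\ref{theorem 4}--\ref{theorem 6} with Lemma~\ref{lemma 5} replacing Lemma~\ref{lemma 3}, and you carry this out in full, including the doubly-even rounding for part~(3). One small correction: for the second branch $S\supseteq\{\tfrac{t+1}{2},\tfrac{t+r-1}{2},1\}$, Lemma~\ref{lemma 6} does not actually apply (its hypotheses fail even after swapping $S$ and $S'$, since $1\in S$ forces $1\notin S'$), so your alternative---observe that $S'$ then satisfies the first branch and apply Lemma~\ref{lemma 5} with the roles of $S$ and $S'$ exchanged---is in fact the argument that works.
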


\begin{theorem}\label{theorem 9}
	
	Let $m \equiv{t} \pmod{r}$ with odd $t\neq 1$. When $S$ contains elements $\frac{t-1}{2}$, $\frac{t+r+1}{2}$ and $1$, or $\frac{t+1}{2}$, $\frac{t+r-1}{2}$ and $t-1$, we have
	
	(1) The codes $\mathcal{C}_{[r,m,S]}$ and $\mathcal{C}_{[r,m,S']}$ form a pair of odd-like duadic codes with parameters $[2^m-1, 2^{m-1},d]$, where
	
	$$
	d \geq \left\{
	\begin{aligned}
	&2^{\frac{m-1}{2}}+1, ~~~~&\mbox{if}&~~ m \equiv{t} \pmod{2r},\\
	&2^{\frac{m-1}{2}}+3, ~~~~~&\mbox{if}&~~ m \equiv{t+r} \pmod{2r}.\\
	\end{aligned}
	\right.
	$$
	
	(2) The dual codes $\mathcal{C}_{[r,m,S]}^{\perp}$ and $\mathcal{C}_{[r,m,S']}^{\perp}$ form a pair of even-like duadic codes with parameters $[2^m-1, 2^{m-1}-1,d^{\perp}]$, where
	
	$$
	d^{\perp} \geq \left\{
	\begin{aligned}
	&2^{\frac{m-1}{2}}+2, ~~~~&\mbox{if}&~~ m \equiv{t} \pmod{2r},\\
	&2^{\frac{m-1}{2}}+4, ~~~~~&\mbox{if}&~~ m \equiv{t+r} \pmod{2r}.\\
	\end{aligned}
	\right.
	$$
	
	(3) The extended codes $\overline {\mathcal{C}_{[r,m,S]}}$ and $\overline{\mathcal{C}_{[r,m,S']}}$ have parameters $[2^m, 2^{m-1},\overline{d}\geq 2^{\frac{m-1}{2}}+4]$. Moreover, they are self-dual and doubly-even.
	
\end{theorem}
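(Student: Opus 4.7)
The plan is to follow verbatim the three-part template already established by Theorems \ref{theorem 4}--\ref{theorem 6} and by Theorem \ref{theorem 8}, replacing the appeals to their auxiliary lemmas by an appeal to Lemma \ref{lemma 6}. First I would establish the duadic structure. Using the identity $w_{2}(i)+w_{2}(n-i)=m$ valid for $1\le i\le n-1$, the congruence $m\equiv t\pmod r$, and the pairing $i_{j}+i_{j}'\equiv t\pmod r$ between the elements of $S$ and $S'$, one checks that $i\in T_{[r,m,S]}$ if and only if $n-i\in T_{[r,m,S']}$. Combined with $S\cap S'=\emptyset$ and $S\cup S'=\mathbb{Z}_{r}$, this exhibits $(T_{[r,m,S]},T_{[r,m,S']},-1)$ as a splitting of $\mathbb{Z}_{n}$, so $\mathcal{C}_{[r,m,S]}$ and $\mathcal{C}_{[r,m,S']}$ form an odd-like duadic pair of length $2^{m}-1$ and common dimension $2^{m-1}$ with a single minimum distance $d$, exactly as in Theorem \ref{theorem 4}.

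Next I would feed Lemma \ref{lemma 6} into the BCH bound. According as $m\equiv t\pmod{2r}$ or $m\equiv t+r\pmod{2r}$, Lemma \ref{lemma 6} supplies an integer $v\in\{2^{(m-1)/2}-1,\,2^{(m+1)/2}-1\}$ coprime to $n$ such that either $\{av:1\le a\le 2^{(m-1)/2}\}$ or $\{av:1\le a\le 2^{(m-1)/2}+2\}$ lies in $T_{[r,m,S]}$; the analogous statement holds for $T_{[r,m,S']}$. Switching from the primitive root $\alpha$ to $\gamma=\alpha^{\bar v}$, where $v\bar v\equiv 1\pmod n$, turns these inclusions into a block of consecutive roots of the generator polynomial relative to $\gamma$. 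Lemma \ref{lemma 2} then delivers $d\ge 2^{(m-1)/2}+1$ in the sub-case $m\equiv t\pmod{2r}$ and $d\ge 2^{(m-1)/2}+3$ in the sub-case $m\equiv t+r\pmod{2r}$, which is exactly the bound claimed in part (1).

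For part (2), the duals $\mathcal{C}_{[r,m,S]}^{\perp}$ and $\mathcal{C}_{[r,m,S']}^{\perp}$ have defining sets $\{0\}\cup T_{[r,m,S]}$ and $\{0\}\cup T_{[r,m,S']}$, so they coincide with the even-weight subcodes of the odd-like duadic pair, have dimension $2^{m-1}-1$, and form an even-like duadic pair. Since $2^{(m-1)/2}$ is even for $m\ge 3$, the lower bounds $2^{(m-1)/2}+1$ and $2^{(m-1)/2}+3$ on $d$ are both odd, so passing to the even-weight subcode raises the minimum weight by at least one, producing $d^{\perp}\ge 2^{(m-1)/2}+2$ and $d^{\perp}\ge 2^{(m-1)/2}+4$ in the two sub-cases. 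For part (3), Theorem \ref{theorem 2} yields self-duality because the splitting is by $\mu=-1$, and $2^{m}-1\equiv -1\pmod 8$ for $m\ge 3$ together with Theorem \ref{theorem 3} forces the extended codes to be doubly-even; rounding $d+1$ up to the next multiple of $4$ then gives $\bar d\ge 2^{(m-1)/2}+4$ in every sub-case.

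The work is therefore entirely routine given Lemma \ref{lemma 6} and Theorems \ref{theorem 1}--\ref{theorem 3}. The only point that genuinely requires care is the sub-case bookkeeping in part (1): one must verify that in each of the two residue classes $m\pmod{2r}$ the power of $2$ produced by Lemma \ref{lemma 6} witnesses the BCH inclusion simultaneously for $S$ and for $S'$, so that the lower bound on $d$ is shared by both members of the duadic pair. I do not expect any conceptual obstacle beyond this bookkeeping.
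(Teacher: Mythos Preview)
Your proposal is correct and matches the paper's approach exactly: the paper does not give a separate proof of Theorem \ref{theorem 9} but states, just before Theorem \ref{theorem 7}, that Theorems \ref{theorem 7}--\ref{theorem 9} follow by ``using the similar method as Theorems \ref{theorem 4}--\ref{theorem 6} and from Lemmas \ref{lemma 4}--\ref{lemma 6}'', which is precisely your template with Lemma \ref{lemma 6} supplying the BCH input. Your additional justifications---the parity argument for the $+1$ gain in part (2) and the doubly-even rounding for $\bar d$ in part (3)---make explicit what the paper leaves implicit in the phrase ``The remaining proof is same as Theorem \ref{theorem 4}''.
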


\begin{remark}
From above, we know that the parameters of the codes $\mathcal{C}_{[r,m,S]}$ and their related codes for $r=2$ and $|S|=1$$^{\cite{11}}$, or $r=4$ and $|S|=2$$^{\cite{12}}$, or $r=6$ and $|S|=3$$^{\cite{13}}$, can be determined by Theorems \ref{theorem 4}-\ref{theorem 9}.
\end{remark}

\dse{6~~Example: Constructions of all binary duadic codes for $r=8$}

To clearly present the conclusions drawn from Sect.5, we will provide an example in this section. Take $r=8$. Let $n=2^m-1$, $S \subsetneq \mathbb{Z}_8$ with $|S|=4$ and $S'= \mathbb{Z}_{8}\setminus S$. Denote
\[
T_{[8,m,S]}=\{1\leq j \leq{n-1}: w_{2}(j)\pmod{8} \in S \}
\]
is the defining set of the code $\mathcal{C}_{[8,,m,S]}$.

It is known that $|\{ S \subsetneq \mathbb{Z}_{8}: |S|=4\}|=\left(\begin{array}{c}8\\4 \end{array}\right)=70$. In the following, we present all binary duadic codes for $r=8$.

(1) When $m \equiv{1} \pmod{8}$, take $S\in\{\{0,2,3,4\}, \{0,2,3,5\}, \{0,2,4,6\}, \{0,2,5,6\}, \{0,3,4,7\},\\
 \{0,3,5,7\}, \{0,4,6,7\}, \{0,5,6,7\}\}$.
 
(2) When $m \equiv{3} \pmod{8}$, take $S\in\{\{0,1,4,5\}, \{0,1,4,6\}, \{0,1,5,7\}, \{0,1,6,7\}, \{0,2,4,5\}, \\
\{0,2,5,7\}, \{0,2,6,7\}, \{0,2,4,6\}\}$.

(3) When $m \equiv{5} \pmod{8}$, take $S\in\{\{0,1,2,6\}, \{0,1,2,7\}, \{0,1,3,6\}, \{0,1,3,7\}, \{0,2,4,7\}, \\
\{0,3,4,6\}, \{0,2,4,6\}, \{0,3,4,7\}\}$.

(4) When $m \equiv{7} \pmod{8}$, take $S\in\{\{0,1,2,3\}, \{0,1,2,4\}, \{0,1,3,5\}, \{0,2,3,6\}, \{0,3,5,6\},\\
 \{0,4,5,6\}, \{0,1,4,5\}, \{0,2,4,6\}\}$.

Note that the parameters of the codes $\mathcal C_{[8,m,S]}$ and $\mathcal{C}_{[8,m,S']}$ for $S=\{0,2,4,6\}$ had been investigated in \cite{11}; and $S\in\{\ \{0,3,4,7\}, \{0,1,4,5\} \}$ had been studied in \cite{12}. Hence, we focus on the parameters of $\mathcal C_{[8,m,S]}$ and $\mathcal{C}_{[8,m,S']}$ except $S\in\{\{0,2,4,6\}, \{0,3,4,7\}, \{0,1,4,5\} \}$.

\begin{theorem}
Let $m \equiv{1} \pmod{8}\geq 9$ be an integer.

(1) When $S=\{0,2,3,4\}$ or $\{0,4,6,7\}$, we have

(1.1) The codes $\mathcal C_{[8,m,S]}$ and $\mathcal{C}_{[8,m,S']}$ form a pair of odd-like duadic codes with parameters $[2^m-1, 2^{m-1}, d\geq 2^{\frac{m-1}{2}}+3]$.

(1.2) The dual codes $\mathcal C_{[8,m,S]}^{\perp}$ and $\mathcal{C}_{[8,m,S']}^{\perp}$ form a pair of even-like duadic codes and have parameters $[2^m-1,2^{m-1}-1,d^{\perp}\geq 2^{\frac{m-1}{2}}+4]$.

(1.3) The extended codes $\overline{\mathcal{C}_{[8,m,S]}}$ and $\overline{\mathcal{C}_{[8,m,S']}}$ have parameters $[2^m,2^{m-1},\overline{d}\geq 2^{\frac{m-1}{2}}+4]$. And  they are self-dual and doubly-even.
 
(2) When $S=\{0,2,3,5\}$ or $\{0,2,5,6\}$ or $\{0,3,5,7\}$ or $\{0,5,6,7\}$, we have

(2.1) The codes $\mathcal C_{[8,m,S]}$ and $\mathcal{C}_{[8,m,S']}$ form a pair of odd-like duadic codes with parameters $[2^m-1, 2^{m-1}, d]$, where 
$$
d \geq \left\{
\begin{aligned}
&2^{\frac{m-1}{2}}+1, ~~~~&\mbox{if}&~~ m \equiv{9} \pmod{16},\\
&2^{\frac{m-1}{2}}+3, ~~~~~&\mbox{if}&~~ m \equiv{1} \pmod{16}.\\
\end{aligned}
\right.
$$

(2.2) The dual codes $\mathcal C_{[8,m,S]}^{\perp}$ and $\mathcal{C}_{[8,m,S']}^{\perp}$ form a pair of even-like duadic codes and have parameters $[2^m-1,2^{m-1}-1,d^{\perp}]$, where 
$$
d^{\perp} \geq \left\{
\begin{aligned}
&2^{\frac{m-1}{2}}+2, ~~~~&\mbox{if}&~~ m \equiv{9} \pmod{16},\\
&2^{\frac{m-1}{2}}+4, ~~~~~&\mbox{if}&~~ m \equiv{1} \pmod{16}.\\
\end{aligned}
\right.
$$

(2.3) The extended codes $\overline{\mathcal{C}_{[8,m,S]}}$ and $\overline{\mathcal{C}_{[8,m,S']}}$ have parameters $[2^m,2^{m-1},\overline{d}\geq 2^{\frac{m-1}{2}}+4]$. And  they are self-dual and doubly-even.

\end{theorem}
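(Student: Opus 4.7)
My plan is to recognize that this final theorem is a direct specialization of the general results in Section~5 to the explicit parameter $r=8$, with $t=1$ coming from the hypothesis $m \equiv 1 \pmod 8$. With these values, the critical quantities $\frac{t-1}{2}$, $\frac{t+r-1}{2}$, $\frac{t+r+1}{2}$, $\frac{t+1}{2}$, and $t-1$ evaluate to $0,4,5,1,0$, respectively. Since $t=1 \neq 3$, only Theorem~\ref{theorem 4} and Theorem~\ref{theorem 8} are available (Theorems~\ref{theorem 7} and~\ref{theorem 9} each require $t \neq 1$), so the task is to match each listed $S$ against the hypotheses of exactly one of these two theorems.

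For part~(1), I would observe that both $S=\{0,2,3,4\}$ and $S=\{0,4,6,7\}$ contain the set $\{0,4\}=\{\frac{t-1}{2},\frac{t+r-1}{2},t-1\}$. Computing the complements $S'=\mathbb{Z}_8\setminus S$ gives $\{1,5,6,7\}$ and $\{1,2,3,5\}$, both of which contain $\{1,5\}=\{\frac{t+1}{2},\frac{t+r+1}{2}\}$. This is precisely the first alternative hypothesis of Theorem~\ref{theorem 4}, which therefore delivers (1.1) with the bound $d \geq 2^{(m-1)/2}+3$. Parts~(1.2) and~(1.3) then follow by invoking Theorem~\ref{theorem5} and Theorem~\ref{theorem 6}; since $m \geq 9$ forces $n \equiv -1 \pmod 8$, the extended codes are automatically doubly-even self-dual.

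For part~(2), I would observe that each of the four sets $S \in \{\{0,2,3,5\},\{0,2,5,6\},\{0,3,5,7\},\{0,5,6,7\}\}$ contains $\{0,5\}=\{\frac{t-1}{2},\frac{t+r+1}{2},t-1\}$, and their respective complements $\{1,4,6,7\}, \{1,3,4,7\}, \{1,2,4,6\}, \{1,2,3,4\}$ all contain $\{1,4\}=\{\frac{t+1}{2},\frac{t+r-1}{2}\}$. This matches the first alternative hypothesis of Theorem~\ref{theorem 8}. The two-branch bound in that theorem depends on $m \bmod 2r = m \bmod 16$: the condition $m\equiv t \pmod{2r}$ becomes $m\equiv 1 \pmod{16}$, giving $d\geq 2^{(m-1)/2}+3$, while $m\equiv t+r \pmod{2r}$ becomes $m\equiv 9 \pmod{16}$, giving $d\geq 2^{(m-1)/2}+1$. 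This is exactly the statement in~(2.1), and (2.2) and~(2.3) are then immediate from the corresponding clauses of Theorem~\ref{theorem 8}.

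I do not expect any step to present a genuine obstacle: the substantive work of establishing square-root-like bounds via the BCH bound and of deducing self-duality for the extensions has already been carried out in the general framework of Section~5. The only care required is in verifying the set memberships listed above (a short finite check), in correctly identifying which pair of the general theorems is applicable when $t=1$, and in correctly translating the modulo-$2r$ congruence condition of Theorem~\ref{theorem 8} into modulo-$16$ congruences for the final statement of~(2.1) and~(2.2).
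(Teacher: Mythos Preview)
Your proposal is correct and matches the paper's approach: Section~6 presents these theorems as direct specializations of Theorems~\ref{theorem 4}--\ref{theorem 9} with $r=8$ and $t=1$, without separate proofs, and you have carried out exactly the required verification that each listed $S$ meets the hypotheses of Theorem~\ref{theorem 4} (for part~(1)) or Theorem~\ref{theorem 8} (for part~(2)). The set-membership checks and the translation of the $m\bmod 2r$ branching into $m\bmod 16$ are all accurate.
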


\begin{theorem}
	Let $m \equiv{3} \pmod{8}$ be an integer.

(1) When $S=\{0,1,4,6\}$ or $\{0,1,6,7\}$ or $\{0,2,4,5\}$ or $\{0,2,5,7\}$, we have

(1.1) The codes $\mathcal C_{[8,m,S]}$ and $\mathcal{C}_{[8,m,S']}$ form a pair of odd-like duadic codes with parameters $[2^m-1, 2^{m-1}, d]$, where 
$$
d \geq \left\{
\begin{aligned}
&2^{\frac{m-1}{2}}+1, ~~~~&\mbox{if}&~~ m \equiv{3} \pmod{16},\\
&2^{\frac{m-1}{2}}+3, ~~~~~&\mbox{if}&~~ m \equiv{11} \pmod{16}.\\
\end{aligned}
\right.
$$

(1.2) The dual codes $\mathcal C_{[8,m,S]}^{\perp}$ and $\mathcal{C}_{[8,m,S']}^{\perp}$ form a pair of even-like duadic codes and have parameters $[2^m-1,2^{m-1}-1,d^{\perp}]$, where 
$$
d^{\perp} \geq \left\{
\begin{aligned}
&2^{\frac{m-1}{2}}+2, ~~~~&\mbox{if}&~~ m \equiv{3} \pmod{16},\\
&2^{\frac{m-1}{2}}+4, ~~~~~&\mbox{if}&~~ m \equiv{11} \pmod{16}.\\
\end{aligned}
\right.
$$

(1.3) The extended codes $\overline{\mathcal{C}_{[8,m,S]}}$ and $\overline{\mathcal{C}_{[8,m,S']}}$ have parameters $[2^m,2^{m-1},\overline{d}\geq 2^{\frac{m-1}{2}}+4]$. And  they are self-dual and doubly-even.

(2) When $S=\{0,1,5,7\}$ or $\{0,2,6,7\}$, we have

(2.1) The codes $\mathcal C_{[8,m,S]}$ and $\mathcal{C}_{[8,m,S']}$ form a pair of odd-like duadic codes with parameters $[2^m-1, 2^{m-1}, d\geq 2^{\frac{m-1}{2}}+1]$.

(2.2) The dual codes $\mathcal C_{[8,m,S]}^{\perp}$ and $\mathcal{C}_{[8,m,S']}^{\perp}$ form a pair of even-like duadic codes and have parameters $[2^m-1,2^{m-1}-1,d^{\perp}\geq 2^{\frac{m-1}{2}}+2]$.

(2.3) The extended codes $\overline{\mathcal{C}_{[8,m,S]}}$ and $\overline{\mathcal{C}_{[8,m,S']}}$ have parameters $[2^m,2^{m-1},\overline{d}\geq 2^{\frac{m-1}{2}}+4]$. And  they are self-dual and doubly-even.

\end{theorem}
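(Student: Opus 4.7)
The plan is to reduce the theorem to a direct application of Theorems \ref{theorem 7} and \ref{theorem 9}, once we identify which of the structural hypotheses each listed set $S$ satisfies. Since $m \equiv 3 \pmod 8$ and $r=8$, we have $t=3$, and consequently
\[
\tfrac{t-1}{2}=1,\quad \tfrac{t+1}{2}=2,\quad \tfrac{t+r-1}{2}=5,\quad \tfrac{t+r+1}{2}=6,\quad t-1=2.
\]
Note that $t=3$ excludes Lemmas~\ref{lemma 3} and \ref{lemma 5} (hence Theorems~\ref{theorem 4}--\ref{theorem 6} and \ref{theorem 8}), so only Theorems~\ref{theorem 7} and \ref{theorem 9} are available, and the condition \textquotedblleft$t-1$\textquotedblright\ coincides with \textquotedblleft$\frac{t+1}{2}$\textquotedblright\ in this case.

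For Part (1), I would simply match each set against the hypothesis of Theorem~\ref{theorem 9}. The first alternative requires $S$ to contain $\{1,6\}$, which holds for $S=\{0,1,4,6\}$ and $S=\{0,1,6,7\}$; the second alternative requires $S$ to contain $\{2,5\}$ (since $\frac{t+1}{2}=t-1=2$ when $t=3$), which holds for $S=\{0,2,4,5\}$ and $S=\{0,2,5,7\}$. The dichotomy $m\equiv t\pmod{2r}$ versus $m\equiv t+r\pmod{2r}$ in Theorem~\ref{theorem 9} translates here to $m\equiv 3\pmod{16}$ versus $m\equiv 11\pmod{16}$, yielding precisely the piecewise bounds stated in (1.1). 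The dual-code statement (1.2) then drops out from (1.1) via the usual even-weight-subcode argument used in Theorem~\ref{theorem5}, and (1.3) follows from Theorem~\ref{theorem 2}--Theorem~\ref{theorem 3} since $n=2^m-1\equiv -1\pmod 8$, exactly as in Theorem~\ref{theorem 6}.

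For Part (2), the same matching works against Theorem~\ref{theorem 7}: $S=\{0,1,5,7\}$ contains $\{1,5\}$ (the first alternative of Theorem~\ref{theorem 7}) and $S=\{0,2,6,7\}$ contains $\{2,6\}$ (the second alternative, using again $t-1=2=\frac{t+1}{2}$). Theorem~\ref{theorem 7} then yields the uniform lower bound $d\geq 2^{(m-1)/2}+1$, the dual bound $d^{\perp}\geq 2^{(m-1)/2}+2$, and the self-dual doubly-even extended codes with $\overline{d}\geq 2^{(m-1)/2}+4$.

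Before invoking these theorems, I would also verify (once, in a single sentence) that $\mathcal{C}_{[8,m,S]}$ and $\mathcal{C}_{[8,m,S']}$ genuinely form a duadic pair, i.e.\ that $S'=\{t-s\pmod r:s\in S\}$ for each listed $S$; this is a short direct check per case (e.g.\ $\{0,1,4,6\}\mapsto\{3,2,-1,-3\}=\{2,3,5,7\}$), and it is precisely what the identity $w_2(n-i)=m-w_2(i)$ in the proof of Theorem~\ref{theorem 4} guarantees once the combinatorial splitting holds. The only real obstacle is this bookkeeping: for $t=3$ several of the distinguishing elements collapse ($t-1=\frac{t+1}{2}$), so one must be careful that each $S$ falls under \emph{exactly} one of Theorems~\ref{theorem 7}, \ref{theorem 9} and not accidentally both, which is indeed the case since the bound in (1) involves the mod-$16$ distinction while the bound in (2) is uniform.
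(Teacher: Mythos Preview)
Your proposal is correct and follows precisely the approach intended by the paper: Section~6 is presented as a specialization of the general results in Section~5, and the theorem for $m\equiv 3\pmod 8$ is obtained by instantiating Theorems~\ref{theorem 7} and~\ref{theorem 9} with $r=8$, $t=3$. Your identification of the relevant elements ($\tfrac{t-1}{2}=1$, $\tfrac{t+1}{2}=t-1=2$, $\tfrac{t+r-1}{2}=5$, $\tfrac{t+r+1}{2}=6$) and the case-by-case matching of each listed $S$ to the correct alternative of Theorem~\ref{theorem 7} or~\ref{theorem 9} is exactly what the paper leaves to the reader.
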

\begin{theorem}
Let $m \equiv{5} \pmod{8}$ be an integer.

(1) When $S=\{0,1,2,6\}$, we have

(1.1) The codes $\mathcal C_{[8,m,S]}$ and $\mathcal{C}_{[8,m,S']}$ form a pair of odd-like duadic codes with parameters $[2^m-1, 2^{m-1}, d\geq 2^{\frac{m-1}{2}}+1]$.

(1.2) The dual codes $\mathcal C_{[8,m,S]}^{\perp}$ and $\mathcal{C}_{[8,m,S']}^{\perp}$ form a pair of even-like duadic codes and have parameters $[2^m-1,2^{m-1}-1,d^{\perp}\geq 2^{\frac{m-1}{2}}+2]$.

(1.3) The extended codes $\overline{\mathcal{C}_{[8,m,S]}}$ and $\overline{\mathcal{C}_{[8,m,S']}}$ have parameters $[2^m,2^{m-1},\overline{d}\geq 2^{\frac{m-1}{2}}+4]$. And  they are self-dual and doubly-even.

(2) When $S=\{0,1,3,7\}$, we have

(2.1) The codes $\mathcal C_{[8,m,S]}$ and $\mathcal{C}_{[8,m,S']}$ form a pair of odd-like duadic codes with parameters $[2^m-1, 2^{m-1}, d\geq 2^{\frac{m-1}{2}}+3]$.

(2.2) The dual codes $\mathcal C_{[8,m,S]}^{\perp}$ and $\mathcal{C}_{[8,m,S']}^{\perp}$ form a pair of even-like duadic codes and have parameters $[2^m-1,2^{m-1}-1,d^{\perp}\geq 2^{\frac{m-1}{2}}+4]$.

(2.3) The extended codes $\overline{\mathcal{C}_{[8,m,S]}}$ and $\overline{\mathcal{C}_{[8,m,S']}}$ have parameters $[2^m,2^{m-1},\overline{d}\geq 2^{\frac{m-1}{2}}+4]$. And  they are self-dual and doubly-even.

(3) When $S=\{0,1,2,7\}$ or $\{0,3,4,6\}$, we have

(3.1) The codes $\mathcal C_{[8,m,S]}$ and $\mathcal{C}_{[8,m,S']}$ form a pair of odd-like duadic codes with parameters $[2^m-1, 2^{m-1}, d]$, where 
$$
d \geq \left\{
\begin{aligned}
&2^{\frac{m-1}{2}}+1, ~~~~&\mbox{if}&~~ m \equiv{5} \pmod{16},\\
&2^{\frac{m-1}{2}}+3, ~~~~~&\mbox{if}&~~ m \equiv{13} \pmod{16}.\\
\end{aligned}
\right.
$$

(3.2) The dual codes $\mathcal C_{[8,m,S]}^{\perp}$ and $\mathcal{C}_{[8,m,S']}^{\perp}$ form a pair of even-like duadic codes and have parameters $[2^m-1,2^{m-1}-1,d^{\perp}]$, where 
$$
d^{\perp} \geq \left\{
\begin{aligned}
&2^{\frac{m-1}{2}}+2, ~~~~&\mbox{if}&~~ m \equiv{5} \pmod{16},\\
&2^{\frac{m-1}{2}}+4, ~~~~~&\mbox{if}&~~ m \equiv{13} \pmod{16}.\\
\end{aligned}
\right.
$$

(3.3) The extended codes $\overline{\mathcal{C}_{[8,m,S]}}$ and $\overline{\mathcal{C}_{[8,m,S']}}$ have parameters $[2^m,2^{m-1},\overline{d}\geq 2^{\frac{m-1}{2}}+4]$. And  they are self-dual and doubly-even.

(4) When $S=\{0,1,3,6\}$ or $\{0,2,4,7\}$, we have

(4.1) The codes $\mathcal C_{[8,m,S]}$ and $\mathcal{C}_{[8,m,S']}$ form a pair of odd-like duadic codes with parameters $[2^m-1, 2^{m-1}, d]$, where 
$$
d \geq \left\{
\begin{aligned}
&2^{\frac{m-1}{2}}+3, ~~~~&\mbox{if}&~~ m \equiv{5} \pmod{16},\\
&2^{\frac{m-1}{2}}+1, ~~~~~&\mbox{if}&~~ m \equiv{13} \pmod{16}.\\
\end{aligned}
\right.
$$

(4.2) The dual codes $\mathcal C_{[8,m,S]}^{\perp}$ and $\mathcal{C}_{[8,m,S']}^{\perp}$ form a pair of even-like duadic codes and have parameters $[2^m-1,2^{m-1}-1,d^{\perp}]$, where 
$$
d^{\perp} \geq \left\{
\begin{aligned}
&2^{\frac{m-1}{2}}+4, ~~~~&\mbox{if}&~~ m \equiv{5} \pmod{16},\\
&2^{\frac{m-1}{2}}+2, ~~~~~&\mbox{if}&~~ m \equiv{13} \pmod{16}.\\
\end{aligned}
\right.
$$

(4.3) The extended codes $\overline{\mathcal{C}_{[8,m,S]}}$ and $\overline{\mathcal{C}_{[8,m,S']}}$ have parameters $[2^m,2^{m-1},\overline{d}\geq 2^{\frac{m-1}{2}}+4]$. And  they are self-dual and doubly-even.

\end{theorem}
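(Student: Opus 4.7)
The plan is to reduce every case in the statement to one of the four general results, Theorems \ref{theorem 4}, \ref{theorem 7}, \ref{theorem 8}, and \ref{theorem 9}, by direct substitution of $r=8$ and $t=5$. Under this substitution the six key residues become $\frac{t-1}{2}=2$, $\frac{t+1}{2}=3$, $\frac{t+r-1}{2}=6$, $\frac{t+r+1}{2}=7$, $t-1=4$ and $1$, so the triples whose presence in $S$ activates each template read $\{2,6,4\}$ or $\{3,7,1\}$ (Theorem \ref{theorem 4}), $\{2,6,1\}$ or $\{3,7,4\}$ (Theorem \ref{theorem 7}), $\{2,7,4\}$ or $\{3,6,1\}$ (Theorem \ref{theorem 8}), and $\{2,7,1\}$ or $\{3,6,4\}$ (Theorem \ref{theorem 9}).

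With this dictionary, each $S$ appearing in the statement sits in exactly one template: $\{0,1,2,6\}\supseteq\{2,6,1\}$ places case (1) under Theorem \ref{theorem 7}; $\{0,1,3,7\}\supseteq\{3,7,1\}$ places case (2) under Theorem \ref{theorem 4}; $\{0,1,2,7\}\supseteq\{2,7,1\}$ and $\{0,3,4,6\}\supseteq\{3,6,4\}$ place case (3) under Theorem \ref{theorem 9}; and $\{0,1,3,6\}\supseteq\{3,6,1\}$ together with $\{0,2,4,7\}\supseteq\{2,7,4\}$ place case (4) under Theorem \ref{theorem 8}. In cases (3) and (4) the sub-dichotomy $m\equiv 5\pmod{16}$ versus $m\equiv 13\pmod{16}$ is exactly the dichotomy $m\equiv t\pmod{2r}$ versus $m\equiv t+r\pmod{2r}$ that appears in Theorems \ref{theorem 8} and \ref{theorem 9}; because Theorem \ref{theorem 8} is sharper in the first regime while Theorem \ref{theorem 9} is sharper in the second, the claimed swap of bounds between cases (3) and (4) emerges automatically. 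The dual and extended statements in every case come along for the ride, since Theorems \ref{theorem 7}, \ref{theorem 8}, \ref{theorem 9} already package all three conclusions, and for case (2) one simply appeals additionally to Theorems \ref{theorem5} and \ref{theorem 6}.

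The work to be done is essentially bookkeeping: verifying that the six sets of interest here (excluding $\{0,2,4,6\}$ and $\{0,3,4,7\}$, treated in \cite{11} and \cite{12}) partition cleanly among the four templates, with no set missed and no set placed ambiguously, and confirming that the $2r$-congruence refinement pairs up with the correct BCH bound when $r=8$. Both checks are routine once the six key residues have been pinned down, so beyond Lemmas \ref{lemma 3}--\ref{lemma 6}, no new auxiliary results are required. The main conceptual point worth emphasising is the parity-reversal between Theorems \ref{theorem 8} and \ref{theorem 9} that produces the swap between cases (3) and (4); otherwise the argument is a mechanical application of the templates.
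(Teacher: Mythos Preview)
Your proposal is correct and matches the paper's own approach: the paper presents the $r=8$ theorems in Section~6 explicitly as examples of the general Theorems~\ref{theorem 4}--\ref{theorem 9}, stating them without separate proof, and your reduction via the dictionary $\frac{t-1}{2}=2$, $\frac{t+1}{2}=3$, $\frac{t+r-1}{2}=6$, $\frac{t+r+1}{2}=7$, $t-1=4$ is exactly the intended specialisation. The only point you leave implicit is the verification that each listed $S$ satisfies the standing splitting hypothesis from Section~3 (namely that $\{t-s \bmod r : s\in S\}=S'$), but for $t=5$, $r=8$ this amounts to checking that each $S$ picks one element from each of the pairs $\{0,5\},\{1,4\},\{2,3\},\{6,7\}$, which is immediate for all six sets.
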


\begin{theorem}
Let $m \equiv{7} \pmod{8}$ be an integer

(1) When $S=\{0,1,2,3\}$ or $\{0,1,3,5\}$, we have

(1.1) The codes $\mathcal C_{[8,m,S]}$ and $\mathcal{C}_{[8,m,S']}$ form a pair of odd-like duadic codes with parameters $[2^m-1, 2^{m-1}, d]$, where 
$$
d \geq \left\{
\begin{aligned}
&2^{\frac{m-1}{2}}+1, ~~~~&\mbox{if}&~~ m \equiv{7} \pmod{16},\\
&2^{\frac{m-1}{2}}+3, ~~~~~&\mbox{if}&~~ m \equiv{15} \pmod{16}.\\
\end{aligned}
\right.
$$

(1.2) The dual codes $\mathcal C_{[8,m,S]}^{\perp}$ and $\mathcal{C}_{[8,m,S']}^{\perp}$ form a pair of even-like duadic codes and have parameters $[2^m-1,2^{m-1}-1,d^{\perp}]$, where 
$$
d^{\perp} \geq \left\{
\begin{aligned}
&2^{\frac{m-1}{2}}+2, ~~~~&\mbox{if}&~~ m \equiv{7} \pmod{16},\\
&2^{\frac{m-1}{2}}+4, ~~~~~&\mbox{if}&~~ m \equiv{15} \pmod{16}.\\
\end{aligned}
\right.
$$

(1.3) The extended codes $\overline{\mathcal{C}_{[8,m,S]}}$ and $\overline{\mathcal{C}_{[8,m,S']}}$ have parameters $[2^m,2^{m-1},\overline{d}\geq 2^{\frac{m-1}{2}}+4]$. And  they are self-dual and doubly-even.

(2) When $S=\{0,1,2,4\}$, we have 

(2.1) The codes $\mathcal C_{[8,m,S]}$ and $\mathcal{C}_{[8,m,S']}$ form a pair of odd-like duadic codes with parameters $[2^m-1, 2^{m-1}, d\geq 2^{\frac{m-1}{2}}+3]$.

(2.2) The dual codes $\mathcal C_{[8,m,S]}^{\perp}$ and $\mathcal{C}_{[8,m,S']}^{\perp}$ form a pair of even-like duadic codes and have parameters $[2^m-1,2^{m-1}-1,d^{\perp}\geq 2^{\frac{m-1}{2}}+4]$.

(2.3) The extended codes $\overline{\mathcal{C}_{[8,m,S]}}$ and $\overline{\mathcal{C}_{[8,m,S']}}$ have parameters $[2^m,2^{m-1},\overline{d}\geq 2^{\frac{m-1}{2}}+4]$. And  they are self-dual and doubly-even.

(3) When $S=\{0,4,5,6\}$, we have 

(3.1) The codes $\mathcal C_{[8,m,S]}$ and $\mathcal{C}_{[8,m,S']}$ form a pair of odd-like duadic codes with parameters $[2^m-1, 2^{m-1}, d\geq 2^{\frac{m-1}{2}}+1]$.

(3.2) The dual codes $\mathcal C_{[8,m,S]}^{\perp}$ and $\mathcal{C}_{[8,m,S']}^{\perp}$ form a pair of even-like duadic codes and have parameters $[2^m-1,2^{m-1}-1,d^{\perp}\geq 2^{\frac{m-1}{2}}+2]$.

(3.3) The extended codes $\overline{\mathcal{C}_{[8,m,S]}}$ and $\overline{\mathcal{C}_{[8,m,S']}}$ have parameters $[2^m,2^{m-1},\overline{d}\geq 2^{\frac{m-1}{2}}+4]$. And  they are self-dual and doubly-even.

(4) When $S=\{0,2,3,6\}$ or $\{0,3,5,6\}$, we have

(4.1) The codes $\mathcal C_{[8,m,S]}$ and $\mathcal{C}_{[8,m,S']}$ form a pair of odd-like duadic codes with parameters $[2^m-1, 2^{m-1}, d]$, where 
$$
d \geq \left\{
\begin{aligned}
&2^{\frac{m-1}{2}}+3, ~~~~&\mbox{if}&~~ m \equiv{7} \pmod{16},\\
&2^{\frac{m-1}{2}}+1, ~~~~~&\mbox{if}&~~ m \equiv{15} \pmod{16}.\\
\end{aligned}
\right.
$$

(4.2) The dual codes $\mathcal C_{[8,m,S]}^{\perp}$ and $\mathcal{C}_{[8,m,S']}^{\perp}$ form a pair of even-like duadic codes and have parameters $[2^m-1,2^{m-1}-1,d^{\perp}]$, where 
$$
d^{\perp} \geq \left\{
\begin{aligned}
&2^{\frac{m-1}{2}}+4, ~~~~&\mbox{if}&~~ m \equiv{7} \pmod{16},\\
&2^{\frac{m-1}{2}}+2, ~~~~~&\mbox{if}&~~ m \equiv{15} \pmod{16}.\\
\end{aligned}
\right.
$$

(4.3) The extended codes $\overline{\mathcal{C}_{[8,m,S]}}$ and $\overline{\mathcal{C}_{[8,m,S']}}$ have parameters $[2^m,2^{m-1},\overline{d}\geq 2^{\frac{m-1}{2}}+4]$. And  they are self-dual and doubly-even.

\end{theorem}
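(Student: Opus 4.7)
The plan is to deduce this theorem as a direct specialization of Theorems \ref{theorem 4}, \ref{theorem 7}, \ref{theorem 8}, and \ref{theorem 9} to the setting $r=8$ and $m\equiv 7\pmod 8$. With $t=7$, the six distinguished residues that appear in the hypotheses of those master theorems are $\frac{t-1}{2}=3$, $\frac{t+1}{2}=4$, $\frac{t+r-1}{2}=7$, $\frac{t+r+1}{2}=0$, $t-1=6$, and the element $1$. Under the involution $i\mapsto t-i=7-i\pmod 8$ that exchanges $S$ and $S'$, these pair off as $\{3,4\}$, $\{0,7\}$, $\{1,6\}$, so the four hypothesis patterns distributed across Theorems \ref{theorem 4}, \ref{theorem 7}, \ref{theorem 8}, \ref{theorem 9} are exhaustive and mutually coherent under swapping $S$ with $S'$. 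The only task is therefore to check, for each of the seven non-excluded sets, which of the six distinguished residues sit in $S$ (equivalently in $S'$), and read off the appropriate conclusion.

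Concretely, for part (1) the sets $S\in\{\{0,1,2,3\},\{0,1,3,5\}\}$ both contain $\{3,0,1\}=\{\frac{t-1}{2},\frac{t+r+1}{2},1\}$, so Theorem \ref{theorem 9} applies (note $t=7\neq 1$) and yields the split bounds $d\geq 2^{(m-1)/2}+1$ when $m\equiv 7\pmod{16}$ and $d\geq 2^{(m-1)/2}+3$ when $m\equiv 15\pmod{16}$. For part (2), $S=\{0,1,2,4\}$ has complement $S'=\{3,5,6,7\}$ containing $\{3,7,6\}=\{\frac{t-1}{2},\frac{t+r-1}{2},t-1\}$, which is the first hypothesis pattern of Theorem \ref{theorem 4} (with $S\leftrightarrow S'$; here $t=7\neq 3$), giving the uniform bound $d\geq 2^{(m-1)/2}+3$. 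For part (3), $S=\{0,4,5,6\}$ contains $\{4,0,6\}=\{\frac{t+1}{2},\frac{t+r+1}{2},t-1\}$, the second pattern of Theorem \ref{theorem 7}, yielding $d\geq 2^{(m-1)/2}+1$. For part (4), both $S\in\{\{0,2,3,6\},\{0,3,5,6\}\}$ contain $\{3,0,6\}=\{\frac{t-1}{2},\frac{t+r+1}{2},t-1\}$, matching the first hypothesis of Theorem \ref{theorem 8} and producing $d\geq 2^{(m-1)/2}+3$ when $m\equiv 7\pmod{16}$ versus $d\geq 2^{(m-1)/2}+1$ when $m\equiv 15\pmod{16}$.

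In each case, the dual-code and extended-code assertions then follow at once from the same master theorem: the dual is the even-weight subcode (hence gains one in the distance bound), and the extended code is self-dual by Theorem \ref{theorem 2} since the splitting is given by $\mu=-1$, and doubly-even by Theorem \ref{theorem 3} since $n=2^m-1\equiv -1\pmod 8$ for every $m\geq 3$. The main obstacle is purely organizational: one must verify for each listed $S$ that either $S$ or $S'$ falls into exactly one of the four hypothesis patterns, and for parts (1) and (4) one must track the parity of $\frac{m-t}{r}\in\{0,1\}$ (i.e.\ distinguishing $m\equiv 7$ from $m\equiv 15$ modulo $16$) to select the correct branch of the piecewise bound. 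No new cyclotomic-coset computation is required beyond what Lemmas \ref{lemma 3}--\ref{lemma 6} already provide.
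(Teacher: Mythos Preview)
Your proposal is correct and follows exactly the approach the paper intends: Section~6 is presented as a specialization of Theorems~\ref{theorem 4}--\ref{theorem 9} to $r=8$, and the paper gives no separate proof for this theorem beyond that implicit reduction. Your case-by-case identification of which master theorem applies to each $S$ (via the distinguished residues $3,4,7,0,6,1$ for $t=7$) is accurate; the one cosmetic point is that in part~(2) you could equivalently observe that $S=\{0,1,2,4\}$ itself contains $\{\tfrac{t+1}{2},\tfrac{t+r+1}{2},1\}=\{4,0,1\}$, matching the second hypothesis pattern of Theorem~\ref{theorem 4} directly rather than passing to $S'$.
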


\dse{7~~Conclusions}

In this paper, we solve one of open problems proposed by Liu et al. in \cite{12}. Specially, we present several families of binary duadic codes $\mathcal{C}_{[r,m,S]}$ with parameters $[2^m-1,2^{m-1},d]$, where $r$ is even, $m\geq 3$ is odd, $S\subsetneq \mathbb{Z}_{r}$ and $|S|=\frac{r}{2}$. It is shown that the lower bounds of minimum distance $d$ is close to square-root bound. Moreover, the parameters of their dual codes and extended codes are given. We can find that they also have square-root-like lower bounds, and the extended codes are all self-dual and doubly-even.

According to relevant literature, the binary duadic codes with good parameters are as follows:

$\bullet$ Binary quadratic residue codes with parameters $[n, \frac{n+1}{2}, d]$, and their even-weight subcodes have parameters $[n, \frac{n-1}{2}, d+1]$, where $d^2\geq n$ and $n \equiv{\pm 1} \pmod{8}$ is prime. (\cite[\text{Section} 6.6]{5})

$\bullet$ The punctured binary Reed-Muller code PRM$_{2}(\frac{m-1}{2}, m)$ with parameters $[2^m-1, 2^{m-1}, 2^{\frac{m+1}{2}}\\-1 ]$, where $m$ is odd. (\cite{26})

$\bullet$ The binary duadic codes $\mathcal{C}_{(i,m)}$ with defining set $T_{(i,m)}=\{1\leq j \leq n-1: w_{2}(j) \equiv{i} \pmod{2}\}$ (\cite{11}); the codes $\mathcal{C}_{(i_1, i_2, m)}$ with defining set $T_{(i_1, i_2, m)}=\{1\leq j \leq n-1: w_{2}(j) \equiv{i_1 ~\text{or}~ i_2} \pmod{4}\}$ (\cite{12}); the codes $\mathcal{C}_{[6,m,S]}$ with defining set $T_{[6,m, S]}=\{1\leq j \leq n-1: w_{2}(j) \pmod{r} \in S\}$, where $S\subsetneq \mathbb{Z}_{6}$ and $|S|=3$ (\cite{13}). 

Since $2^m-1$ is composite for infinitely many odd $m$, the duadic codes $\mathcal{C}_{[r,m,S]}$ of this paper are not identical with the binary quadratic residue codes. Due to the duadic codes $\mathcal{C}_{[r,m,S]}$ of this paper and the punctured Reed-Muller codes PRM$_{2}(\frac{m-1}{2},m)$ have different defining sets, they are not identity. Through analysis, it can be seen that the binary duadic codes in \cite{11}-\cite{13} are special cases of the codes $\mathcal{C}_{[r,m,S]}$ given in this paper, which can be obtained by taking $r=2$ and $|S|=1$, $r=4$ and $|S|=2$, $r=6$ and $|S|=3$, respectively. Hence, the binary duadic codes $\mathcal{C}_{[r,m,S]}$ for any even $r$ and odd $m$ presented in this paper enrich the theory of binary duadic codes with length $n$ and dimension $\frac{n\pm 1}{2}$. It would be interesting to improve the lower bounds of the binary duadic codes given in this paper.

\dse{Acknowledgments}

This work is supported by the National Natural Science Foundation of China(No. U21A20428, NO. 12171134, NO. 62201009, NO. 12201170) and the Natural Science Foundation of Anhui Province(NO. 2108085QA06, NO. 2108085QA03).

\dse{Statements and Declarations}

\noindent{\bf Declaration of competing interest }

The authors declare that they have no conflicts of interest to this work.\\

\noindent{\bf Data availability }

All data generated or analysed during this study are included in the published article.

\end{document}